\renewcommand{\epsilon}{\varepsilon}
\newcommand{\defn}[1]		{{\textit{\textbf{\boldmath #1}}}\xspace}
\newtheorem{theorem}{Theorem}
\newtheorem{lemma}[theorem]{Lemma}
\newtheorem{proposition}[theorem]{Proposition}
\definecolor{magenta4}{rgb}{0.5625,0,0.5625}
\definecolor{green4}{rgb}{0,0.5625,0}
\definecolor{orange4}{rgb}{0.98,0.31,0.09}
\newcommand{\rob}[1]{{\scriptsize \textcolor{purple}{Rob: {#1}}}}
\newcommand{\mfc}[1]{{\scriptsize \textcolor{orange4}{Martin: {#1}}}}
\newcommand{\mab}[1]{{\scriptsize \textcolor{red}{Michael: {#1}}}}
\newcommand{\alex}[1]{{\scriptsize \textcolor{magenta}{Alex: {#1}}}}
\newcommand{\prashant}[1]{{\scriptsize \textcolor{blue}{Prashant: {#1}}}}
\newcommand{\richard}[1]{{\scriptsize \textcolor{blue}{Richard: {#1}}}}
\newcommand{\david}[1]{{\scriptsize \textcolor{blue}{David: {#1}}}}
\newcommand{\newtext}[1]{{{#1}}}
\newcommand{\depricate}[1]{}
\newcommand{\sysname}{\textsc{AdaptiveQF}\xspace}
\newcommand{\qf}{quotient filter\xspace}
\newcommand{\para}[1]{\smallskip\noindent\textbf{#1}.}
\newcommand{\etal}{\text{et al}.\xspace}
\newcommand{\calA}{\mathcal{A}}
\newcommand{\calF}{\mathcal{F}}
\newcommand{\calI}{\mathcal{I}}
\newcommand{\no}{\textsc{no}\xspace}
\newcommand{\yes}{\textsc{yes}\xspace}
\newcommand{\No}{\textsc{No}\xspace}
\newcommand{\Yes}{\textsc{Yes}\xspace}
\newcommand{\yesno}{\textsc{yes/no}\xspace}
\newcommand{\YesNo}{\textsc{Yes/No}\xspace}
\newcommand{\lcp}{\operatorname{\textsf{lcp}}}
\newcommand{\poly}{\operatorname{poly}}
\newcommand{\prob}[1]{\Pr\left[#1\right]}
\DeclareMathOperator*{\E}{\mathbb{E}}
\newcommand{\expect}[1]{\E\left[#1\right]}
\algrenewcommand\ALG@beginalgorithmic{\footnotesize}
\algnewcommand{\algorithmicor}{\textbf{ or }}
\algnewcommand{\OR}{\algorithmicor}
\newcommand{\InFloat}[2]{\@ifundefined{@captype}{#2}{#1}}
\title{Adaptive Quotient Filters}
\author{Richard Wen} 
\email{rwen1@umd.edu}
\affiliation{
  \institution{University of Maryland}
  \country{USA}
}
\author{Hunter McCoy} 
\email{hunter@cs.utah.edu}
\affiliation{
  \institution{University of Utah}
  \country{USA}
}
\author{David Tench} 
\email{dtench@pm.me}
\affiliation{
 \institution{Lawrence Berkeley National Labs}
  \country{USA}
}
\author{Guido Tagliavini} 
\email{guido.tag@rutgers.edu}
\affiliation{
 \institution{Rutgers University}
  \country{USA}
}
\author{Michael A.~Bender} 
\email{bender@cs.stonybrook.edu}
\affiliation{
 \institution{Stony Brook University}
  \country{USA}
}
\author{Alex Conway} 
\email{ajc473@cornell.edu }
\affiliation{
 \institution{Cornell Tech}
  \country{USA}
}
\author{Martin Farach-Colton} 
\email{martin@farach-colton.com}
\affiliation{
 \institution{New York University}
  \country{USA}
}
\author{Rob Johnson} 
\email{robj@vmware.com}
\affiliation{
 \institution{VMware Research}
  \country{USA}
}
\author{Prashant Pandey}
\email{ppandey@cs.utah.edu}
\affiliation{
 \institution{University of Utah}
  \country{USA}
}
\begin{document}


\begin{abstract}

Filters tradeoff accuracy for space and occasionally return false positive matches with a bounded error. Numerous systems use filters in fast memory to avoid performing expensive I/Os to slow storage. A fundamental limitation in traditional filters is that they do not change their representation upon seeing a false positive match. \newtext{Therefore, the maximum false positive rate is only guaranteed for a single and not an arbitrary set of queries.} We can improve the filter's performance on a stream of queries (especially on skewed distributions) if we can adapt after seeing false positives.


Adaptive filters, such as telescoping and adaptive cuckoo filters, update their representation upon detecting a false positive to avoid repeating the same error in the future. Adaptive filters require an auxiliary structure, typically much larger than the main filter and often residing on slow storage, to facilitate adaptation. 


However, existing adaptive filters are not practical and have seen no adoption in real-world systems due to two main reasons. Firstly, they offer weak adaptivity guarantees, meaning that fixing a new false positive can cause a previously fixed false positive to come back. Secondly, the sub-optimal design of the auxiliary structure results in adaptivity overheads so substantial that they can actually diminish the overall system performance compared to a traditional filter.



In this paper, we design and implement \sysname, the first practical adaptive filter with minimal adaptivity overhead and strong adaptivity guarantees, which means that the performance and false-positive guarantees continue to hold even for adversarial workloads. The \sysname is based on the state-of-the-art quotient filter design and preserves all the critical features of the quotient filter such as cache efficiency and mergeability. Furthermore, we employ a new auxiliary structure design which results in considerably low adaptivity overhead and makes the \sysname practical in real systems.

\newtext{We evaluate the \sysname by using it to filter queries to an on-disk B-tree database and find no negative impact on insert or query performance compared to traditional filters.} Against adversarial workloads, the \sysname preserves system performance, whereas traditional filters incur $2\times$ slowdown from adversaries representing as low as $1\%$ of the workload. Finally, we show that on skewed query workloads, the \sysname can reduce the false-positive rate $100\times$ using negligible (1/1000th of a bit per item) space overhead.



\end{abstract}


\maketitle

\pgfplotsset{
    legend entry/.initial=,
    every axis plot post/.code={%
        \pgfkeysgetvalue{/pgfplots/legend entry}\tempValue
        \ifx\tempValue\empty
            \pgfkeysalso{/pgfplots/forget plot}%
        \else
            \expandafter\addlegendentry\expandafter{\tempValue}%
        \fi
    },
}

\pgfplotsset{
  AQFStyle/.style         = {legend entry = AQF, color = cyan!50!blue,   fill = cyan!50!blue},
  TQFStyle/.style         = {legend entry = TQF, color = cyan,           fill = cyan},
  ACFStyle/.style         = {legend entry = ACF, color = violet!40!blue, fill = violet!40!blue},
  QFStyle/.style          = {legend entry = QF,  color = green!80!black, fill = green!80!black},
  CFStyle/.style          = {legend entry = CF,  color = orange,         fill = orange},
  SSStyle/.style          = {legend entry = SS,  color = purple,         fill = purple},
  CBFStyle/.style         = {legend entry = CBF, color = black,         fill = black},
  LineStyle/.style        = { fill=none, mark repeat = 2 },
  AQFLineStyle/.style     = {AQFStyle, mark = oplus*, LineStyle},
  TQFLineStyle/.style     = {TQFStyle, mark = square*, LineStyle},
  ACFLineStyle/.style     = {ACFStyle, mark = Mercedes star, LineStyle},
  QFLineStyle/.style      = {QFStyle,  mark = triangle*, LineStyle},
  CFLineStyle/.style      = {CFStyle,  mark = diamond*, LineStyle},
  SSLineStyle/.style      = {SSStyle,  mark = pentagon*, LineStyle},
  CBFLineStyle/.style     = {CBFStyle,  mark = diamond*, LineStyle},
  AQFBarStyle/.style      = {AQFStyle},
  TQFBarStyle/.style      = {TQFStyle},
  ACFBarStyle/.style      = {ACFStyle},
  QFBarStyle/.style       = {QFStyle},
  CFBarStyle/.style       = {CFStyle},
  SSBarStyle/.style       = {SSStyle},
  CBFBarStyle/.style      = {CBFStyle},
}

\newcommand{\throughputplotwidth}{2in}

\pgfplotsset{
  ThroughputBarGraph/.style = {
    ybar,
    width = \linewidth + 27pt,
    height = 115pt,
    bar width = 9pt,
    xtick = \empty,
    ymin = 0,
    error bars/error bar style = {color = black},
    error bars/y dir = both,
    error bars/y explicit,
    scaled y ticks = base 10:-6,
    ytick scale label code/.code={},
    yticklabel = { \pgfmathprintnumber{\tick}\hspace{0.08em}M },
    y tick label style = {rotate = 90},
    legend to name={insert-throughput-legend},
    legend columns = -1,
  },
  ThroughputTable/.style = {
    table/x = dummy,
    table/y = median,
    table/y error plus expr = {\thisrow{max} - \thisrow{median}},
    table/y error minus expr = {\thisrow{median} - \thisrow{min}},
  },
  SpaceTable/.style = {
    table/x = dummy,
    table/y = space,
  },
}

\newcommand{\workloadplotwidth}{2in}

\pgfplotsset{
  WorkloadGraph/.style = {
    width = \linewidth + 15pt,
    height = 110pt,
    ticks = major,
    try min ticks = 3,
    xlabel = Queries,
    enlarge x limits = false,
    scaled x ticks = base 10:-6,
    xtick scale label code/.code={},
    xticklabel = { \pgfmathprintnumber{\tick}\hspace{0.08em}M },
    y tick label style = {rotate = 90},
    y label style = {at = {(axis description cs:0.26,0.5)}, anchor=south},
    scaled y ticks = false,
    x label style = {at = {(axis description cs:0.5,0.1)}, anchor=north},
    xmin = 0,
    legend columns = -1,
    legend to name = trash
  },
  WorkloadTableFP/.style = {
    table/x = queries,
    table/y = fprate,
  },
  WorkloadTableSpace/.style = {
    table/x = queries,
    table/y = space,
  },
  HashAccessesTable/.style = {
    table/x = queries,
    table/y = accesses,
  },
}

\newcommand{\churnplotwidth}{3.3in}
\pgfplotsset{
  ChurnGraph/.style = {
    width = \linewidth + 18pt,
    height = 1.55in,
    ticks = major,
    try min ticks = 3,
    xlabel = Queries,
    enlarge x limits = false,
    scaled x ticks = base 10:-6,
    xtick scale label code/.code={},
    xticklabel = { \pgfmathprintnumber{\tick}\hspace{0.08em}M },
    y tick label style = {rotate = 90},
    y label style = {at = {(axis description cs:0.26,0.5)}, anchor=south},
    x label style = {at = {(axis description cs:0.5,0.1)}, anchor=north},
    ymin = 0,
    scaled y ticks = false,
    legend columns = -1,
    legend to name = trash
  },
  WorkloadTableFP/.style = {
    table/x = queries,
    table/y = fprate,
  }
}

\newcommand{\blacklistplotwidth}{3.3in}
\pgfplotsset{
  BlacklistGraph/.style = {
    width = \linewidth + 13pt,
    height = 1.55in,
    ticks = major,
    try min ticks = 3,
    xlabel = \No list/\Yes list Ratio,
    enlarge x limits = false,
    scaled x ticks = base 10:-6,
    xtick scale label code/.code={},
    xticklabel = { \pgfmathprintnumber{\tick} },
    scaled y ticks = false,
    y label style = {at = {(axis description cs:0.26,0.5)}, anchor=south},
    x label style = {at = {(axis description cs:0.5,0.1)}, anchor=north},
    legend columns = -1,
    legend to name = {blacklist-legend},
  },
  BlacklistTableFP/.style = {
    table/x expr = {\thisrow{ratio} / (1 - \thisrow{ratio})},
    table/y = fprate,
  },
  BlacklistTableSpace/.style = {
    table/x expr = {\thisrow{ratio} / (1 - \thisrow{ratio})},
    table/y expr = {\thisrow{space} + 200},
  }
}

\newcommand{\diskinsertplotwidth}{3.3in}
\pgfplotsset{
  DiskInsertGraph/.style = {
    width = \linewidth + 13pt,
    height = 1.55in,
    ticks = major,
    try min ticks = 3,
    xlabel = load factor,
    enlarge x limits = false,
    xtick scale label code/.code={},
    xticklabel = { \pgfmathprintnumber{\tick} },
    scaled y ticks = false,
    y label style = {at = {(axis description cs:0.26,0.5)}, anchor=south},
    x label style = {at = {(axis description cs:0.5,0.1)}, anchor=north},
    legend columns = -1,
    legend to name = {disk-insert-legend},
  },
  DiskInsertTableInserts/.style = {
    table/x = fill,
    table/y = through,
  },
  DiskInsertTableQueries/.style = {
    table/x = queries,
    table/y = through,
  },
  DiskInsertTableFPRates/.style = {
    table/x = queries,
    table/y = fprate,
  },
  DiskInsertTableMapAccesses/.style = {
    table/x = fill,
    table/y = accesses,
  },
}

\newcommand{\diskqueryplotwidth}{3.3in}
\pgfplotsset{
  DiskQueryGraph/.style = {
    width = \linewidth + 13pt,
    height = 1.55in,
    ticks = major,
    try min ticks = 3,
    xlabel = throughput (millions of ops/sec),
    enlarge x limits = false,
    scaled x ticks = base 10:-6,
    xtick scale label code/.code={},
    xticklabel = { \pgfmathprintnumber{\tick} },
    scaled y ticks = false,
    y label style = {at = {(axis description cs:0.26,0.5)}, anchor=south},
    x label style = {at = {(axis description cs:0.5,0.1)}, anchor=north},
    legend columns = -1,
    legend to name = {disk-query-legend},
  },
  DiskQueryTableHistogram/.style = {
    table/x = fill,
    table/y = through,
  }
}

\newcommand{\diskadversarialplotwidth}{3.3in}
\pgfplotsset{
  DiskAdversarialGraph/.style = {
    width = \linewidth + 13pt,
    height = 1.55in,
    ticks = major,
    try min ticks = 3,
    xlabel = adversarial,
    ylabel = throughput,
    enlarge x limits = false,
    xtick scale label code/.code={},
    scaled x ticks = base 10:2,
    xticklabel = { \pgfmathprintnumber{\tick} },
    scaled y ticks = false,
    y label style = {at = {(axis description cs:0.26,0.5)}, anchor=south},
    x label style = {at = {(axis description cs:0.5,0.1)}, anchor=north},
    legend columns = -1,
    legend to name = {disk-adversarial-legend},
  },
  AdversarialQueries/.style = {
    table/x = freq,
    table/y = through,
  }
}

\pgfplotsset{
  HighLoadThroughputGraph/.style = {
    ybar,
    width = \linewidth + 27pt,
    height = 115pt,
    bar width = 9pt,
    xtick = \empty,
    ymin = 0,
    error bars/error bar style = {color = black},
    error bars/y dir = both,
    error bars/y explicit,
    ytick scale label code/.code={},
    y tick label style = {rotate = 90},
  },
  HighLoadTable/.style = {
    table/x = dummy,
    table/y = through,
  },
}



\section{Introduction}\label{sec:intro}



\defn{Filters}~\cite{Bloom70,PandeyBJP17,FanAnKa14,GrafLe20} are a go-to data structure in systems builders' toolkits. Filters maintain a compact representation of a set of items, saving space by allowing a small \defn{false-positive rate $\epsilon$}:  a membership query to a filter for set $S$ returns \yes for any $x\in S$ and returns \no with probability at least $1-\epsilon$ for any $x\not\in S$.

Filters are powerful because allowing false-positives dramatically reduces the space required to store $S$. For example, if we are required to answer queries on $S$ with no errors, then the size of a data structure is at least $\log {u \choose n}= \Omega(n \log (u/n))$ bits, where $u$ is the size of the universe~\cite{CarterFG78}. In contrast, modern filters have size $n\log(1/\varepsilon) + cn$, where $c$ is between 2 and 3~\cite{bender2012don,PandeyBJP17}.  This means that, for typical false-positive rates around 1\% to 0.1\%, a filter can store one or two bytes of information per item, no matter how large the universe.  This bound is tight up to lower-order terms in that any filter requires at least $n\log(1/\varepsilon)$ bits~\cite{CarterFG78}.
%
Filters have been extensively used to compactly summarize a set of items in networks, storage systems, machine learning, computational biology, and other areas~\cite{BroderMi04, WangSuJi14, AlsubaieeBeBo14, ZhuLiPa08, DebnathSeLi11, DebnathDeLi10, ReagenGA17, jackman2017abyss,pell2012scaling, bradley2019ultrafast, solomon2016fast, chu2014biobloom, stranneheim2010classification,EsmetBeFa12, Farach-ColtonFeMo09,YuanZhJa16,JannenYuZh15a,JannenYuZh15b,yuan2017writes}.
%


\para{Types of problems}
The following problem settings offer challenges for traditional filters and opportunities for improvement:
\begin{itemize}
    \item \textbf{Static \yesno lists.}  Given a set $Y$ of \yes items and a set $N$ of \no items chosen from a universe $U$, build a data structure that answers \yes to any query for an item in $Y$, \no for any query for an item in $N$, and answers \no with probability at least $1-\epsilon$ for any other item in $U$.
    \item \textbf{Dynamic \yesno lists.} This is similar to the static \yesno list problem, except that the sets $Y$ and $N$ may be updated dynamically.
    \item \textbf{Skewed query distributions.} In some settings, the frequency distribution of queries may be highly skewed.
    In such settings, the observed false-positive rate of the filter can be very far from the expected rate, $\epsilon$.
    For example, if all the queries are for a single item, the observed false-positive rate will be 0 or 1, but not in between.
    Avoiding repeated mistakes can reduce the false-positive rate of a filter, or equivalently, reduce the filter size needed to achieve some target error rate.  In summary, filters that ignore the skew may perform arbitrarily poorly, whereas filters that exploit the skew can outperform the lower bounds.
    \item \textbf{Adversarial queries.}  In this problem, the goal is to design a filter that  guarantees that the fraction of queries from an adversary that are false positives is at most $\epsilon$, even when the queries are chosen by an adversary that is trying to cause the filter to return as many false positives as possible.  Here we assume the attacker can detect when a query results in a false positive and can repeat queries arbitrarily.  This is a more general case of the skewed-query distribution.
\end{itemize}

\para{Prior work}
Prior work has considered each of these problems separately, and consequently has developed distinct approaches to solving each of them.  Chazelle et al.~\cite{chazelle2004bloomier} describe \defn{Bloomier filters}, which encode static \yesno lists.  Bloomier filters also support a limited form of dynamicity---they support moving items between $Y$ and $N$ but not adding or deleting items.
Tripunitara and Carbunar~\cite{TripunitaraCa09} introduce \defn{cascading Bloom filters} to solve the static \yesno list problem, and these are used in many  systems~\cite{salikhov2014using,ChikhiRi13,RozovShHa14,MousaviTr19,Larisch2017}. Reviriego et al.~\cite{reviriego2021} proposed an extension of the \defn{static xor filter} to support the static \yesno list problem.  Li et al.~\cite{seesaw2022} proposed the \defn{seesaw counting filter} for the dynamic \yesno list problem, specifically in the context of detecting malicious URLs.
Mitzenmacher et al.~\cite{MitzenmacherPR20} proposed \defn{adaptive cuckoo filters} to solve the skewed-query-distribution problem.
Bender et al.~\cite{BenderFaGo18} define the notion of an adaptive filter, which offers strong guarantees on the number of false positives that an application will see, even with a skewed or even adversarial query distribution, and present the broom filter, which meets their definition.  Bender et al.~\cite{BenderDaFa21} analyzed the performance of \defn{broom filters}~\cite{BenderFaGo18} on queries that obey Zipfian distributions. Lee et al.~\cite{lee2021telescoping} proposed \defn{telescoping filters} to address the skewed query distribution problem.


\para{This paper} We argue that all of these problems can be naturally solved, with comparable space and better performance than the prior special-purpose solutions, by what we call a \defn{monotonically adaptive} filter, which is a filter that never forgets a false positive.  Furthermore, we show how to build fully dynamic monotonically adaptive filters from quotient filters~\cite{PandeyBJP17a}.  We design, build, and evaluate a fully dynamic monotonically adaptive filter, the \sysname, and show that it outperforms several prior solutions to these problems. Finally, we prove lower bounds on the space required to solve the \yesno list problem, showing that the \sysname is space-optimal.

Like adaptive filters, monotonically adaptive filters can \defn{adapt}, i.e.~they can update their state to correct false positives. Bender \etal defined what it means for a filter to be adaptive: every query has a probability of at most $\epsilon$ of returning a false positive, independent of the outcome of all prior queries~\cite{BenderFaGo18}.  Adaptivity is a very strong property: it guarantees that after $n$ queries---even adversarially generated queries---the upper bound on the number of false positives is tightly concentrated around $\epsilon n$. Specifically, the system will see at most $\epsilon n + O\big(\sqrt{ \epsilon n\log n}+\log n\big)$ false positives with high probability. 

\textbf{Adaptive filters require two things: feedback about their false positives and auxiliary data to correct them.}  For example, if an adaptive filter is used by an application to avoid database lookups for non-existent items, then the application can inform the filter that a query for an item $x$ was a false positive if the subsequent database query returned that $x$ is not present in the database.  Adaptive filters also need an auxiliary structure to store information to support adaptation.  Bender \etal showed that this auxiliary information is necessary and in fact must be quite large: the total size of an adaptive filter on a set $S$ essentially must be large enough to store $S$~\cite{BenderFaGo18}.  
The trick is to break the filter into two parts, a small in-memory component that is accessed on every query and a large auxiliary structure that is accessed only during adaptations and hence can reside in slower storage.  Note that all proposed adaptive filters have this overall structure.  In some applications, such as when the filter is in front of a database, the database may be able to serve as the auxiliary structure, so that the total storage requirements of the system remain essentially unchanged. See Bender \etal for more discussion~\cite{BenderFaGo18}.

What makes monotonically adaptive filters special is that, when they adapt, their false-positive set only shrinks. Prior proposed adaptive filters were not monotonic: fixing one false positive could cause other elements to become false positives. Even filters that meet Bender \etal's strong definition of adaptivity need not be monotonic. For example, Bender \etal's broom filter periodically rotates its hash function, at which point it forgets all the false positives it corrected under the old hash function.

\defn{Fingerprint filters}, such as the quotient filter~\cite{PandeyBJP17a}, are good candidates for building practical monotonically adaptive filters because they store a set $S$ by compactly storing the set $h(S)=\{h(x) \mid x\in S\}$, where $h$ is a hash function and $h(x)$ is called the \defn{fingerprint} of $x$.  A query for $y$ simply checks whether $h(y)\in h(S)$, so the only source of false positives is fingerprint collisions. Fingerprint filters support a false-positive rate of $\epsilon$ on a set of size $n$ by using $\log (n/\epsilon)$-bit fingerprints and typically store the first $\log n$ bits of each fingerprint implicitly so that the per-item space is $\log (1/\epsilon) + O(1)$ bits. To make a fingerprint filter monotonically adaptive, we need only to be able to eliminate fingerprint collisions. To do so, we can use a hash function $h$ that outputs a large number of bits and initially store the first $\log (n / \epsilon)$ bits of $h(x)$ for each $x\in S$, where $n=|S|$. Whenever we discover a false positive, i.e.~a query $y$ whose fingerprint matches a fingerprint for some $x\in S$, we modify the filter to store a longer fingerprint for $x$ until the collision disappears. In fact, Kopelowitz et al.~\cite{kopelowitz2021support} show that this approach is not only natural but necessary: a space-efficient fingerprint filter must have variable-length fingerprints to be adaptive.  

We implement a prototype fully dynamic monotonically adaptive filter, the \sysname, that uses only $(1 + o(1))n\log(1/\epsilon) + O(n)$ space. We demonstrate experimentally that it outperforms existing purpose-built solutions for skew distribution and \yesno list workloads in terms of space efficiency, insertion speed, and query speed. We also show a space lower bound for the static \yesno list problem and that the \sysname meets the space lower bound up to low-order terms.

The challenge is to store and update these variable-length fingerprints efficiently in terms of space and time. 
Prior theoretical proposals for building adaptive filter have had complex mechanisms for managing variable-sized fingerprints~\cite{BenderFaGo18}.  We propose a simple scheme for implementing variable-sized fingerprints within \sysname.  Even though adapting requires extending a fingerprint by only two bits in expectation~\cite{BenderFaGo18}, \sysname simplifies fingerprint management by \emph{over-adapting}, i.e.~fingerprints grow by multiples of $\log (1/\epsilon)$ bits.  Over-adaptation could cause the filter to use too much space and over-minimize the false-positive probability below $\epsilon$. 
However, in practice, this is not an issue. 

\para{Our results} We evaluated the \sysname in isolation and as a component of larger systems.  The high-level summary of our findings is that the \sysname can speed up query throughput by delivering far fewer false positives than non-adaptive filters. We compared the performance of the \sysname to that of two other adaptive filters, the telescoping quotient filter (TQF)~\cite{lee2021telescoping} and the adaptive cuckoo filter (ACF)~\cite{MitzenmacherPR20}. We also compared it to two non-adaptive filters, the quotient filter (QF)~\cite{PandeyBJP17} and the cuckoo filter (CF)~\cite{FanAnKa14}.

\begin{enumerate}[leftmargin=*]
    \item In a disk-based database, \sysname is between 10$\times$ --- 30$\times$ faster than other adaptive filters (TQF, ACF) for overall insertion performance and is comparable to non-adaptive filters.
    \item In a disk-based database, \sysname achieves between 15\% --- 6$\times$ faster overall query performance than non-adaptive filters (QF, CF) for adversarial queries and has comparable performance for uniform-random query workloads.
    \item \sysname is dynamic (i.e.~support deletes and resizability) but still achieves similar or better space usage compared to purpose-built solutions for the static \yesno list problem.
    \item \sysname has negligible adaptivity overhead compared to the quotient filter on which is based.
    \item \sysname preserves all the critical features of the quotient filter such as mergeability, resizeability, and bulk insertions.
\end{enumerate}

\textbf{In summary, the adaptivity overhead is minimal in \sysname compared to non-adaptive filters. It is able to substantially improve overall system performance in scenarios where disk accesses incur a large cost. Furthermore, it matches or beats the performance of custom-built solutions for static \yesno list problems.}






\section{Filters and Applications}

In this section, we give an overview of general-purpose filters and adaptive filters. We then describe applications that can benefit from adaptive filters and review existing purpose-built filters for these applications. We divide these applications in two broad categories:

\begin{enumerate}[leftmargin=*]
    \item Applications that use traditional filters but have skewed workload patterns where adaptive filters can help.
    \item Applications that currently use purpose-built solutions where the cost of certain false positives is very high.
\end{enumerate}

\subsection{General-purpose filters}

For decades, the Bloom filter~\cite{Bloom70} was essentially the only available filter, but Bloom filters are suboptimal in terms of space usage, running time, and data locality, and they support a bare-bones set of operations (insert and lookup).   The Bloom filter has inspired numerous variants~\cite{QiaoLiCh14, PutzeSaSi07, LuDeDu11, CanimMiBh10, DebnathSeLi11, AlmeidaBaPr07, FanCaAl00, BonomiMiPa06}. The counting Bloom filter (CBF)~\cite{FanCaAl00} supports deletes at the cost of space.  
%
%
The blocked Bloom filter~\cite{PutzeSaSi07} provides better cache locality than the standard Bloom filter but it comes at a higher false-positive rate.

The quotient filter (QF)~\cite{PaghPR05,DillingerMa09,BenderFaJo12a,EinzigerFr16,PandeyBJP17} is a fingerprint filter. It stores fingerprints using Robin Hood hashing~\cite{CelisLaMu85}.
It divides the fingerprint into two parts, higher order $\log(n)$-bits as the quotient and lower order $\log(1/\epsilon)$-bits as the remainder. The quotient bits are used to locate a slot in the table, and the remainder bits are stored in that slot. 
It supports insertion, deletion, lookups, enumeration, resizing, and merging. The counting quotient filter (CQF)~\cite{PandeyBJP17}, improves upon the performance of the quotient filter and adds variable-sized counters to count items using asymptotically optimal space, even in large and skewed datasets.

The cuckoo filter~\cite{FanAnKa14} also stores fingerprints but uses cuckoo hashing instead of Robin Hood hashing. The Morton filter~\cite{BreslowJ18} is a variant of the cuckoo filter that is designed to speed up insertion using optimizations designed for hierarchical-memory systems.

\subsection{Strongly adaptive filters}
A \emph{strongly adaptive filter} modifies its state so that if a false positive is repeated, the probability that it is still a false positive is at most $\epsilon$. 
%
%
Bender et al.~\cite{BenderFaGo18} introduce the broom filter and the notion of strong adaptivity used in this paper. 
The broom filter is based on the quotient filter, but supports variable-length fingerprints to adapt to (and correct) false positives.
Lee et al.~\cite{lee2021telescoping} introduce the telescoping filter, which is also built using the quotient filter~\cite{PandeyBJP17}. The telescoping filter is strongly adaptive but avoids directly extending fingerprints. They change the remainder (or the tag) stored in the filter by using a different lower-order $\log(1/\epsilon)$-bits. Additionally, they maintain a table to record which $\log(1/\epsilon)$-bits they have used as a remainder in the filter for the adapted items. As the filter adapts, the size of this table grows (so in this sense their fingerprints are variable-length and strong adaptivity is possible). The adaptive cuckoo filter of Mitzenmacher et al.~\cite{MitzenmacherPR20} is adaptive in the sense that it changes its representation in response to false positives, but is not strongly adaptive. See Section~\ref{subsec:skew}.


\subsection{Filters for skewed query distributions}
\label{subsec:skew}
A common and important application of filters is their use in key-value stores based on Log-Structured Merge Trees (LSMs) and $B^\varepsilon$-trees~\cite{OneilChGa96,BrodalFa03b}. In these key-value stores, filters are used to avoid performing multiple expensive disk accesses per query~\cite{RocksDB,LevelDB,ConwayGC20,ChangDG06,WiredTiger,ScyllaDB,Cassandra}. In some database systems, this type of key-value store is used as the storage engine~\cite{RocksDB,LevelDB,ConwayGC20,Cassandra}.

In an LSM tree, data is stored in SSTables, which are static, sorted arrays of key-value pairs. The SSTables are organized into levels L0, L1, \ldots, where L0 is the smallest and holds the most recently written data. Each subsequent level is larger by a factor of $g$, where $g$ is a configuration parameter. As data is written, SSTables are moved down the structure and are merged into each other according to a chosen compaction policy. In general, at any point in time, a given key can be present in an SSTable on any level (even multiple SSTables per level in some variants), so queries need to check at least one SSTable on each level, which is expensive.
As a result, in almost all practical systems, each SSTable has a corresponding filter so that queries only read data from the SSTable when the key is present there or due to rare false positives from the filter.  Note that in this application, queries to SSTables on smaller levels are often negative since most of the data is stored in the larger levels. However, the smaller levels may contain recent updates, so they cannot be skipped. Thus, filters in the smaller levels see frequent negative queries~\cite{dayan2017monkey}.  

One challenge to applying adaptive filters to LSMs is that the SSTables are typically static and most LSM trees store their filters in their SSTables. However, this is not necessary or universal.  For example, SplinterDB stores its filters separately from the data they cover~\cite{ConwayGC20}.  An LSM could even store adaptive versions of its filters only in memory.  The adaptivity information would be lost on a crash, meaning that false positives might increase after a crash, but this should be rare enough to be insignificant.

Query workloads to database systems commonly follow a power-law or otherwise skewed distribution~\cite{Newman05,ClausetSh09}.  As a result, state-of-the-art benchmark suites YCSB, TPC-C, and TPC-E incorporate data skew into most of their workloads~\cite{CooperSi10,LeuteneggerDi93,Hogan09,ChenAA11}. In fact, many systems have tried to mitigate or even exploit the effects of data skew~\cite{NgoRe14,ChengKo14,BeameKo14,DugganPa15,ZhangRo22}.
A skewed database query workload also results in a skewed workload for filters. Moreover, as noted above, many of the queries will be negative. Adaptive filters, such as \sysname, can outperform non-adaptive filters on skewed workloads by eliminating repeated false positives on frequently accessed keys.

\subsection{Filters for \YesNo list problems}

\para{Detecting Malicious URLs} 
Malicious websites pose a major threat to internet users. For example, merely visiting a malicious URL may cause a user's web browser to be hijacked~\cite{sun2016automating}. 
%
Since URLs are long~\cite{urlcount} and abundant~\cite{kaspersky2021}, an effective way for a router to block malicious URLs is to store them as the \yes list of a filter~\cite{seesaw2022}. 

However, it is important not to block legitimate URls that are false positives~\cite{deeds2020stacked}, so every positive response of the malicious-URL filter must be verified~\cite{nytwait2012,leng_2022}, which is expensive.  
This additional overhead imposed on false positive (safe) URLs is especially undesirable when the URL is \textit{important}. For instance, a false positive may block access to a voter registration webpage, or emergency weather information, whereas slowing the loading of other false-positive pages is relatively benign. 


One way to address this variability in false positive cost is to store important false positives in a \no list, so that they are never blocked and so they do not pay the URL-verification penalty.   Chazellete et al.~\cite{chazelle2004bloomier} introduced the Bloomier filter which solves the \yesno list problem. Li et al.,~\cite{seesaw2022} present the Seesaw Counting Filter (SSCF), which implements a \yesno list filter specifically for the malicious URL blocking problem.
Reviriego et al.~\cite{reviriego2021} present the Integrated Filter which also implements a \no list. Both focus on the case where the \no list is static and known ahead of time. 
The SSCF has an extension for adding \no list items dynamically, but it is not guaranteed to prevent false positives by doing so and can also introduce false negatives.

URL requests may also vary in frequency, and these frequencies may even change over time. However, the existing filters literature on this problem does not consider this generalization. For this work we restrict our focus to the standard assumption that a static set of high-priority elements must never be false positives.

\para{Certificate Revocation Lists}
In the TLS PKI (Transport Layer Security Public Key Infrastructure~\cite{housley1999internet}), browsers should check whether a certificate has been revoked before trusting connections authenticated by the certificate.  Traditionally this was done via a ``pull'' approach, i.e., browsers would check with a central repository of revoked certificates when they established a connection.  More recent work has sought to move to a ``push'' model, where browsers receive frequent updates to the list of revoked certificates, so the browser can perform a purely local check when it establishes a new connection.

Larisch, et al.,~\cite{Larisch2017} proposed CRLite, which uses cascading Bloom filters to store the set of revoked certificates at the client. They observed that, in the case of TLS certificates, the universe is a small finite set and known at construction time. They can build a cascade of Bloom filters to exactly represent the set of revoked certificates. In the cascading Bloom filter, each subsequent Bloom filter contains false positive set from the earlier Bloom filter until the false positive set is small enough to be stored exactly in a hash table. A central system would periodically push updates to this list to browsers.  The updates are encoded as bitwise deltas on the original filters.  When the space of certificates grows too large, so that they need to resize the filters, then they have to transmit new filters from scratch.

\para{De Bruijn graph traversal}  In computational biology, de Bruijn graphs (DBGs) are at the heart of numerous genomic sequence analysis pipelines~\cite{PandeyBJP17a, PandeyBJP17b}. In a de Bruijn graph, each node is a $k$-length subsequence (of the DNA bases, ``A'', ``C'', ``G'', and ``T'') from the underlying biological samples, and two nodes are connected via an edge if they share a $(k-1)$-length subsequence.  Analyses traverse DBGs during assembly, error correction, ``contig'' detection, and numerous other applications.

De Bruijn graphs are often large enough that they do not fit in the memory.  Numerous methods have been proposed to exploit their special structure for compression.  One of the main tricks is to take advantage of the fact that each node has at most 4 incoming edges and 4 outgoing edges (one for each base that can be prepended or appended to the node).  Thus a traversal can query for the existence of each edge, so we can represent the DBG using an (approximate) set data structure that supports only membership queries, i.e. a filter.  In this application, false positives in the filter result in extra edges in the graph.
To avoid the false edges, Chikhi and Rizk~\cite{chikhi2013space} proposed to store the de Bruijn graph in a cascading Bloom filter as the set of queries is known in advance. Each Bloom filter stores the false positives from querying the earlier Bloom filter using all possible queries during the dbg traversal.

\if 0
\subsection{Attacking adaptive filters} \label{sec:attacking}
An adversary who is able to issue queries to a Bloom filter, and detect when the filter returns a false positive, can eventually force the filter to give false positives on nearly every query. It simply issues queries until it observes a false positive and then repeats the query that induced the false positive. Such an attack works on any filter that does not change its representation in response to false positives. Since adaptive filters do change their representations in response to false positives, some of them are more robust to attacks by adversaries.

Reviriego \etal ~\cite{reviriego2021} demonstrate that an adversary who is able to issue queries to an adaptive cuckoo filter, and detect when the filter returns a false positive, can eventually force the filter to give false positives on nearly every query. The attack exploits the property that adaptive cuckoo filters will revert to the initial fingerprint for some element in the represented set after a certain number of adaptations. After finding an \defn{adaptation loop}, a sequence of queries that when queried in order force the filter to 1) yield multiple false positives and 2) revert to its initial state at the end of the loop, the adversary is able to replay this adaptation loop indefinitely, forcing the filter to return many false positives. The attack succeeds even when the adversary does not have exclusive query access to the filter (that is, other processes/users may query the filter at any time), though the time to complete an attack increases in this case.

Further, Kopelowitz \etal~\cite{kopelowitz2021support} show that any variant of the adaptive cuckoo filter, or indeed any space-efficient filter with fixed-length fingerprints, can be forced by such an adversary to suffer a high proportion of false positives.
Filters that are adaptive according to Bender \etal's definition~\cite{BenderFaGo18}, such as the broom filter and the telescoping adaptive filter~\cite{lee2021telescoping}, are provably robust against this type of adversary because they guarantee a false-positive rate of $\epsilon$ for any sequence of queries, even those generated by an adversary.

\fi
\section{\sysname design}
\label{sec:design}

In this section, we describe the high-level scheme of our solution, without worrying about how to encode this design in a small number of bits.  The encoding is described in subsequent sections.

\begin{figure}[t]
    \centering
    \includesvg[width=\columnwidth]{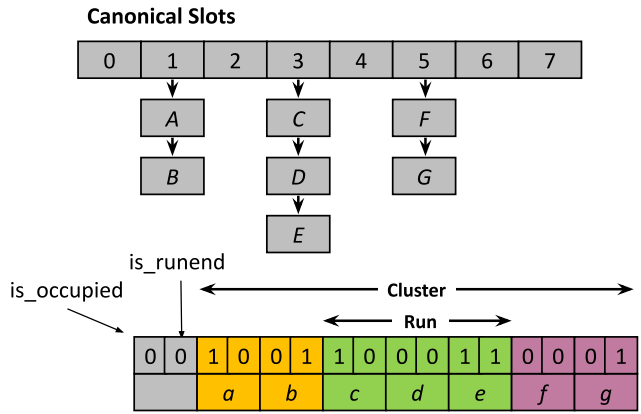}
    \caption{The quotient filter~\cite{PandeyBJP17} structure. The upper part shows the logical structure. The lower part shows the encoding of the logical structure in the quotient filter.
    It uses two metadata bits per slot. All items that share the same canonical location are stored together in a {\it \bf \em run}. A sequence of items without any empty slot is called a {\it \bf \em  cluster}. Note: the items are showed in upper case in the canonical representation and the remainders corresponding to the items in the slots are showed in lower case.}
    \label{fig:qf_structure}
\end{figure}

\subsection{High-level design}

The \sysname builds on the idea of fingerprint filters, which store a set $S$ by storing the set of \defn{fingerprints} $h(S)=\left\{h(x)\mid x\in S\right\}$.  The basic idea behind the \sysname is that, initially, we store only enough bits of each fingerprint to ensure a false positive rate of $\epsilon$, i.e.~we store a set $F$ of fingerprints, where each fingerprint is actually a \textit{prefix} of $h(x)$, for some $x\in S$.   A query for $y$ returns \yes if some fingerprint in $F$ is a prefix of $h(y)$.
When we discover a false positive, i.e., an item $y\not\in S$ such that some fingerprint $f\in F$ is a prefix of $h(y)$, we increase the length of $f$ until it is no longer a prefix of $h(y)$.

The main issue that arises is: how can we extend a fingerprint $f$ in our filter without knowing the full hash $h(x)$ of which it is a prefix?  To solve this problem, all adaptive filters maintain a reverse map that maps fingerprints in $F$ back to their full hashes (or even the original keys).  Notably, the need to store full hashes means this map will be much larger than the filter, possibly too large to fit in fast storage. Thus, we would like to minimize how often this reverse map needs to be updated and/or queried. 

At the very minimum, one insert needs to be done to the reverse map for each insert to the filter. In addition, one reverse map query needs to be made for each adaptation. However, since adaptations are responses to false positives, a disk access is done at this point anyway, and adapting ensures that the offending query will not cause another disk access in the future. Thus, the \sysname maintains its fast performance on general queries. Ideally, we would access the reverse map at no other time.



In the following sections, we describe how we store and update variable-length fingerprints efficiently and how we maintain the reverse map will little overhead.

\subsection{Quotient filter}\label{sec:qf-design}

We build the \sysname using the \qf~\cite{PandeyBJP17}. The \qf has the ability to associate small variable-length values with fingerprints. We exploit this feature to extend the fingerprint size to adapt. Using the \qf as the underlying filter helps retain advantages, such as good cache-locality, deletion, resizability, enumerability, mergeability, etc., that the \qf has over other filters. In this section, we give an overview of Pandey et al.'s \qf~\cite{PandeyBJP17}. Later in~\Cref{sec:implementation}, we explain how we modify the \qf schema to build the \sysname.


The quotient filter (QF) stores an approximation of a multiset $S\subseteq \mathcal{U}$ by maintaining a compact, lossless representation of the multiset $h(S)$, where $h:\mathcal{U}\rightarrow\{0,\ldots,2^p-1\}$ is a hash function that maps items from the universe $\mathcal{U}$ to a $p$-bit fingerprint. To handle a multiset of up to $n$ distinct items while maintaining a false-positive rate of at most $\epsilon$, the QF sets $p=\log_2 \frac{n}{\epsilon}$ (see~\cite{BenderFaJo12a}  for the analysis).

The \qf uses Robin-Hood hashing~\cite{CelisLaMu85} to store the fingerprints compactly in a table. It consists of an array $Q$ of $2^q$ \defn{slots} and a hash function $h$ mapping items from a multiset to $p$-bit integers, where $p \geq q$.
Robin-Hood hashing is a variant of linear probing in which we try to place an item $a$ in slot $h(a) / 2^{p-q}$, but shift items down when there are collisions to create empty space. Robin-Hood hashing maintains the invariant that, if $h(a) < h(a')$, then $a$ will be stored in an earlier slot than $a'$.

The \qf divides $h(x)$ into its first $q$ bits, \defn{quotient} $h_0(x)$, and its remaining $r$ bits, \defn{remainder} $h_1(x)$.  Together, the quotient and remainder form the \defn{fingerprint} of $x$.  The \qf maintains an array $Q$ of $2^q$ $r$-bit slots, each of which can hold a single remainder.  When an element $x$ is inserted, the \qf attempts to store the remainder $h_1(x)$ at index $h_0(x)$ in $Q$
(which we call $x$'s \defn{canonical slot}).  If that slot is already in
use, then the \qf uses Robin hood hashing to find the next available empty slot to store $h_1(x)$. All the items that share the same canonical slot are stored together in a \defn{run} and a sequence of runs stored contiguously with no empty space is called a \defn{cluster}. During an insert operation, the next available empty slot is found at the end of the cluster. If an item lands at the start of the cluster then all the items in cluster must be shifted to create an empty space (see~\Cref{fig:qf_structure}).

The \qf also maintains 2 bits of additional metadata ({\it is\_occupied} and {\it is\_runend}) per slot in order to determine which slots are in use and the canonical slot of each remainder stored in $Q$. When an item is inserted into the canonical slot, the {\it is\_occupied} bit for that slot is set to 1. The {\it is\_runend} bit is set to one for every slot that contains the last remainder in a run. 
Please refer to Pandey et al.~\cite{PandeyBJP17} for further details. 

\if 0
The \qf further maintains an extra 8 bits per block of 64 slots (called \defn{offset}) to more quickly determine the start and end of each run during insertions and queries. Therefore, the total overhead due to metadata in a \qf is 2.125 bits per slot. In order to maintain good performance, the array of slots cannot be filled beyond $95\%$. 

To quickly determine the start and end of a run, the \qf uses $RANK$ and $SELECT$\footnote{For a given slot $i$, $RANK(i)$ returns the number of 1s in a bit vector $B$ up to $i$. $SELECT(i)$ performs the inverse operation, returning the index of the $i$th 1 in $B$.} operations~\cite{PandeyBJP17}. If the {\it is\_occupied} bit is set to 0 for a canonical slot, then there is no item with that canonical slot in the table. If the {\it is\_occupied} bit is set to 1, then we can use the $RANK$ operation to determine the number of runs starting before this canonical slot and then perform the $SELECT$ operation to determine the end of the run, as there is a one-to-one correspondence between the {\it is\_occupied} and {\it is\_runend} bit for each run.
\fi

\begin{figure}
\centering
\begin{subfigure}{0.47\textwidth}
    \includesvg[width=\textwidth]{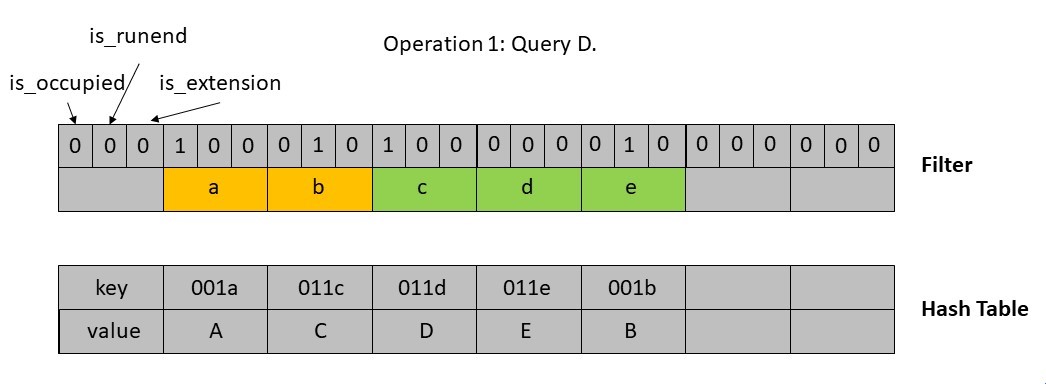}
    \caption{\emph{Querying D.} Hashing D gives 011d, and 011 is the quotient of the 4th bucket . The run begins at the 4th slot, and matching remainder d is found in this run in slot 5. Filter returns YES.}
    \label{fig:first}
\end{subfigure}
\hfill
\begin{subfigure}{0.47\textwidth}
    \includesvg[width=\textwidth]{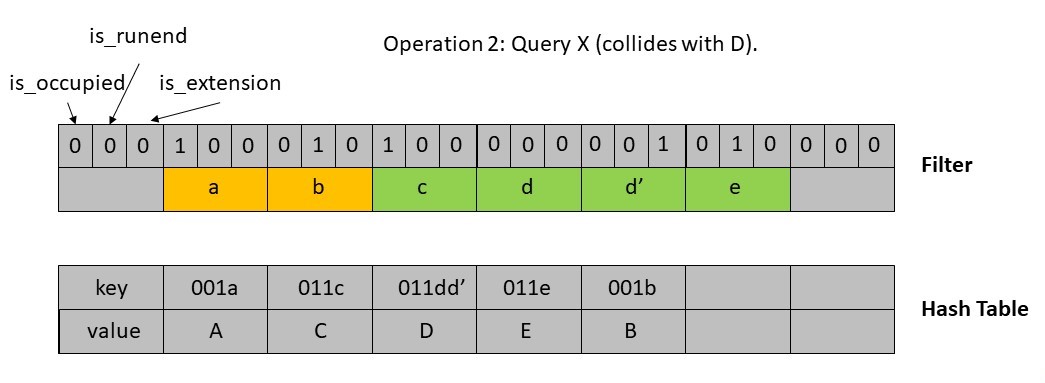}
    \caption{\emph{Querying X.} X hashes to 011d, so query is as (a). When informed that X is a false positive, note that the fingerprint responsible is the first fingerprint with quotient-remainder pair 011d. Thus, look up (011d, 0) in hash table and extend fingerprint to 011dd' according to D. Add extra slot for d' and set extension bit, shifting other slots right. Note that the reverse map does not need to be modified because the fingerprint is still the first fingerprint with quotient-remainder pair 011d.}
    \label{fig:second}
\end{subfigure}
\hfill
\begin{subfigure}{0.47\textwidth}
    \includesvg[width=\textwidth]{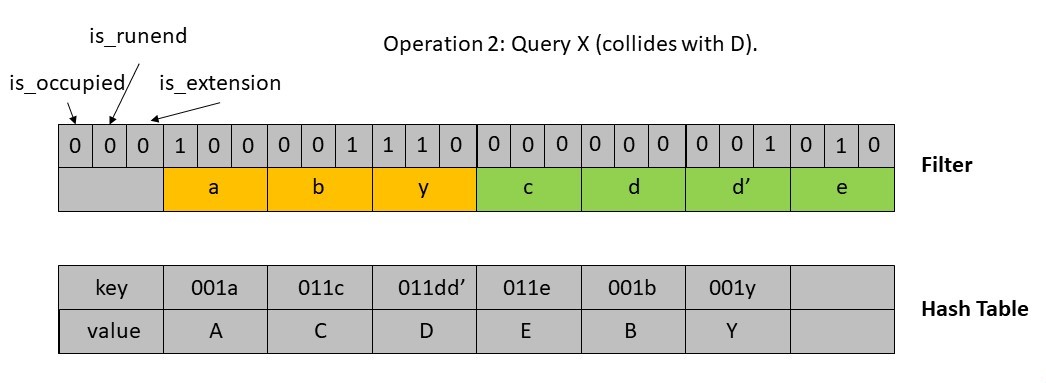}
    \caption{\emph{Inserting Y.} Y happens to hash to 001a. Find the run for bucket 001 as for query. Look for remainder $a$ in the run. Having found an existing minirun, add y to the end of the minirun by shifting everything to the right. Add Y to hash table associated with the second fingerprint with quotient-remainder pair 001a.}
    \label{fig:third}
\end{subfigure}
        
\caption{\sysname block diagram and the reverse map. It shows the changes in the schema and the reverse map during queries, insertions, and adaptations.}
\label{fig:aqf-diagram}
\end{figure}

\section{\sysname Implementation}\label{sec:implementation}

The AQF uses the same fundamental structure as the QF. Here we describe how we modify the QF schema to support adaptivity.  

\subsection{False positives}
First, let us understand how false positive queries occur in the QF. The QF stores the fingerprints compactly and exactly in the table. Therefore, a false positive occurs due to a hash collision while computing the fingerprint. That is, there exist two distinct items ($x$ and $y$) that share the same fingerprint. If $x \in S$ but $y \not\in S$, then a query for $y$ will result in a false positive.

A QF guarantees a false-positive rate~\footnote{The false-positive rate is defined as the ratio of the number of false positives reported over the total number of queries in the set.} of $\epsilon = 2^{-r}$, where $r$ is the number of remainder bits, for a set of items drawn uniformly at random from the universe $\mathcal{U}$.
%
However, if the items are drawn from some arbitrary distribution then there is no guarantee on the false positive rate. For example, if a query set consists of a single item that happens to be a false positive, then the false positive rate of the filter for that set will be 1.

\subsection{Adapting to false positives}
The \sysname updates its representation on every false positive query. These updates ensure that the false-positive probability for any query $x$ remains $\leq \epsilon$, even if $x$ was a false positive the last time that it was queried.

\sysname updates its representation by using an additional slot for the false positive item, extending its fingerprint by the next $r$ bits in its hash, known as an \defn{extension}. Multiple extensions can be added if the resulting fingerprint still produces a false positive on the given query, with the probability that the fingerprint incurs a false positive decreasing by a factor of $2^{-r}$ with each extension. The \defn{fingerprint} of an item in the \sysname now refers to its quotient, remainder, and any extensions that have been added to it in the filter.

We introduce an additional overhead bit, the \emph{is\_extension} bit, to differentiate between slots storing remainders and extensions, bringing the total overhead to 3.125 bits per slot. Recall that the QF has an overhead of 2.125 bits per item. Slots with an unmarked \emph{is\_extension} bit are treated as usual. A marked \emph{is\_extension} bit indicates that the slot contains an extension of the previous remainder.

\para{Lists in the reverse map}
To extend a fingerprint, we need to obtain the original key mapped to the current fingerprint in the filter. To do this, we maintain an on-disk hash table that functions as a reverse map from fingerprints to the keys. We extend the fingerprint for the mapped key by adding additional bits derived from the original hash of the key. Recall that the reverse map is not specific to the \sysname and is needed by any adaptive filter to retrieve additional information necessary for adaptation.


However, because the \sysname adapts by appending to fingerprints rather than reshuffling them as done in the telescoping filter~\cite{lee2021telescoping} and the adaptive cuckoo filter~\cite{MitzenmacherPR20}, the existing bits of the fingerprint prior to adaptation stay the same. Most notably, the quotient and remainder, which every fingerprint is guaranteed to have, are fixed from the moment of insertion. We use this fact to construct a reverse map that does not need to be accessed in response to natural shifting in the filter during insertions.

Suppose two items $x$ and $y$ share a quotient and remainder. Because runs are sorted by quotient, and fingerprints within a run are sorted by remainder, it follows that the fingerprints of $x$, $y$, and any other items with the same quotient-remainder pair are stored contiguously in the \sysname. Let us call this group of fingerprints a \defn{minirun}. We will call these fingerprints' shared quotient-remainder pair their \defn{minirun ID}. Note that even when an item does not share a quotient-remainder pair with any other item, its fingerprint is still contained in its own minirun of length 1. The \defn{minirun rank} of an item is the rank of its fingerprint within its minirun.
%
The reverse map will map from a minirun ID to a list of all the inserted keys with that minirun ID. We order the keys in the list according to the order of their fingerprints in the minirun. 


To insert an item into the database, we start by inserting its fingerprint into the filter. We locate the slot in which the fingerprint belongs, inserting it into the filter and obtaining its minirun rank in the process. We then turn to the reverse map and attempt to map that item's minirun ID to the item itself. If this is the first instance of its minirun ID, we create a new linked list consisting of only that item. If a previously-inserted item shares the same minirun ID, we insert the new item into the existing linked list in the position indicated by its minirun rank. As a result, all subsequent items are shifted back one position in the linked list, matching how their associated fingerprints in the minirun also shifted back one position when we made the filter insertion.

When querying for an item in the database, we always start by querying the filter for its fingerprint, obtaining its minirun rank if the fingerprint is present. We then query the reverse map for the minirun ID, retrieving the associated linked list and obtaining the item at the position indicated by the minirun rank. We can detect a false positive because the original key located at that position in the linked list will differ from the queried key. In that case, we use the original key we found to adapt the offending fingerprint in the filter, which we can do without modifying the reverse map.

\Cref{fig:aqf-diagram} shows the representation of the \sysname and the reverse hash map during insertions and adaptations during queries.

\para{The reverse map as a database} 
So far, we have acted as though the reverse map and the database are two separate data structures, both on disk. As a result, the reverse map would seem to incur additional disk accesses for every insertion and positive query. We call this the \defn{split} reverse map setup. However, we can merge the database and reverse map into a single key-value store, eliminating this cost. Because the reverse map allows us to uniquely identify the original key of any fingerprint in the filter, we can simply store any relevant values next to these original keys in the reverse map. That is, instead of the database and reverse map being two separate mappings (one from keys to values and the other from fingerprints to keys), we use the reverse map as a single mapping from fingerprints to keys and values. In effect, the reverse map is being used as a replacement for the database. This does not require any additional queries over the conventional key to values mapping -- the filter is always queried before the database, so we already obtain the fingerprint for any queried item. We call this the \defn{merged} reverse map setup. For all of our disk experiments, we use the merged setup. However, note that this optimization leads to the items in the database being stored in hash order. Thus, this would no longer support range queries. For applications that wish to use range queries, the split reverse map would suffice. We evaluate the overhead of using the split setup later. 


\para{Counters}
Like the counting quotient filter upon which it is based, the adaptive quotient filter supports storing multiple copies of the same item by storing a variable-length counter with each fingerprint.  In the CQF, each item had one slot holding its remainder, followed by 0 or more slots that encoded the number of times that fingerprint was present in the filter.  The CQF encoded the counter such that singleton fingerprints used zero additional slots for their count, which meant the CQF was just as space efficient as a non-counting quotient filter when the set contained no duplicates.  However, because the CQF has only two metadata bits, encoding the counters was quite complex.  

Since the AQF has three metadata bits, we can use a much simpler encoding.  
The AQF has three types of slots---remainders, extensions, and counters---and we need only to distinguish between the two types of ``extra'' slots that can follow a remainder: extension slots and counter slots.  In our encoding, both extension and counter slots have the \emph{is\_extension} bit set, and we use the \emph{is\_runend} bit to indicate whether the slot holds an extension or a counter.  This is safe because we indicate the end of a run by setting the \emph{is\_runend} bit on only the remainder slot of the last fingerprint in the run -- any other slots of that fingerprint may be free to use their own \emph{is\_runend} bit as needed to indicate if they are extension or counter slots.

\subsection{Dynamic \YesNo List Problem
}
\label{sec:dynamic-adaptiveqf}
We can extend \sysname to the dynamic \yesno list problems as follows.  First, we extend the filter to store an extra bit with each fingerprint, i.e. we extend each slot with one extra bit.  We then store all elements of both $Y$ and $N$ in the filter, performing adaptations to eliminate any fingerprint collisions that occur during insertion and using the extra bit to record which set each fingerprint belongs to.  We can now add new items to $Y$ and $N$ in the same way: add the item to the filter, performing any adaptation necessary to eliminate fingerprint collisions and tagging the items with their origin.  We can delete items by simply deleting them from the filter.  Deleting a fingerprint $f$ may mean that we can shorten other fingerprints in the filter that we extended because they collided with $f$.  Finding any such eligible fingerprints is easy and efficient because the \sysname stores fingerprints sorted lexicographically, so all the fingerprints that can be shortened will be stored in a contiguous run of slots containing $f$.

Like the quotient filter, the \sysname also supports growing and shrinking.  The data structure described in this section operates by initially provisioning the table with enough slots to hold a certain number of elements and adaptivity bits. In the case of the static \sysname filter described in \Cref{sec:blacklist-upper-bound}, when sizes $n$ and $m$ of the \yes list and \no list, respectively, are known ahead of time, we can predict the total space cost with high confidence, as per \Cref{thm:blacklist-failure-probability}. This tight allocation of the table is what allows us to match the lower bound from \Cref{thm:lower-bound}.

Alternatively, if the \yes and \no list sizes are not known in advance, we can instead fix two upper bounds $\hat{n}$ and $\hat{m}$ on their maximum allowed size, and construct the following \emph{dynamic} \yesno filter: we allocate our table as a function of $\hat{n}$ and $\hat{m}$ (instead of $n$ and $m$), and perform all the insertions into the \no list and queries of \yes list elements dynamically, as they are needed. The closer $n$ and $m$ get to $\hat{n}$ and $\hat{m}$, respectively, the closer the space cost will be to the space lower bound for a static filter on these \no list and \yes list sets. 

\subsection{Skewed and Adversarial Workloads}
\label{sec:skewed-workloads}

\newtext{
The basic \sysname structure is monotonically adaptive, i.e. it never repeats a false positive.  
The cost of never forgetting false positives is that, over time, the \sysname needs more and more slots to hold adaptivity information.  
Like the regular quotient filter, the cost of an insert into the \sysname is $\Theta(\log n / (1-\alpha)^2)$ w.h.p., where $n$ is the number of slots and $\alpha$ is the fraction of slots that are currently in use~\cite{BenderFaJo12}.
Thus, as the \sysname adapts, $\alpha$ approaches 1 and insertions performance can fall off a cliff.
In static and dynamic \yesno-list problems, this can be mitigated by making the filter large enough to accommodate the anticipated number of items.
}

\newtext{
However, for skewed and adversarial workloads, we can recover space used by adaptation, ensuring that the total space used by the filter remains constant over time.  This will compromise monotonicity, but will still ensure that the number of false positives from any sequence of $k$ queries is very close to $\epsilon k$ w.h.p.
}

\newtext{
The basic idea is to periodically rebuild the filter with a new hash function.  Rebuilding the filter puts the attacker back into the position of attacking a filter about which he has no information.  Thus we can drop any adaptivity information after the rebuild.  In other words, when we do the rebuild, each item will consume a single slot.
}

\newtext{
So, for example, we can build the filter, say, 10\% larger than necessary, run the filter until the extra space is consumed by adaptations, and then rebuild.  Furthermore, we can de-amortize the rebuild process.  See Bender et al.~ for details~\cite{BenderFaGo18}.
}

\section{Static \YesNo List Bounds}
\label{sec:theory}

In this section, we prove a space lower bound for solving the \textit{static} \yesno list problem and 
we show that we can use \sysname to build an optimal solution to the static \yesno list problem, up to low order terms.
Importantly, existing practical solutions to this problem (namely the Seesaw Counting Filter \cite{seesaw2022},
and the Bloomier Filter \cite{chazelle2004bloomier}) are either not always correct solutions to the problem, or their space cost is at least constant factor away from the lower bound.

Let $U$ be a finite universe of elements, and let $Y$ and $N$ be subsets of $U$, with $Y \cap N = \emptyset$. Let $\epsilon \in (0, 1)$. A \defn{\yesno filter} supports queries with the following guarantees: (i) every query for $y \in Y$ must answer \yes, (ii) every query for $z \in N$ must answer \no, and (iii) every query for $x \notin Y \cup N$ answers \yes with probability at most $\epsilon$. Notice that \yesno filters are static data structures.
Although it's possible to consider a dynamic version where the elements of either the \yes or \no set (or both) are inserted and deleted dynamically, in this section we do not study this scenario.

Throughout this section we will let $n = |Y|$ be the size of the \yes list, $m = |N|$ be the size of the \no list set, and $u = |U|$ be the size of the universe.

\subsection{Upper Bound for \YesNo Filters}
\label{sec:blacklist-upper-bound}

Next we give an upper bound for the static \yesno list problem, based on the \sysname. We maintain the notation of the previous section, and let $\mu := \epsilon m/n$ be the number of \no false positives per \yes element. Consider the following implementation, which we refer to as the \yesno \sysname.

\begin{enumerate}[leftmargin=*]
    \item Create an \sysname $F$ with capacity for $n$ elements, and $\calA(n, m, \epsilon)$ bits reserved for adaptivity, where
    \[
    \calA(n, m, \epsilon) := (1 + o(1)) n \log(1 + \mu) + O(n).
    \]
	\item Insert into $F$ every element from $Y$.
	\item Query $F$ on each element from $N$.
	\item If $F$ becomes full at any point before all queries are done, fail.
	\item Return $F$.
\end{enumerate}

Importantly, the queries in the final step fix all false positives from $N$. The resulting filter satisfies the requirements of a \yesno filter.

\begin{proposition}
The \yesno \sysname uses \\$(1 + o(1)) n \log(\max\{1/\epsilon, m/n\}) + O(n)$ bits of space.
\end{proposition}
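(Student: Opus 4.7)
The plan is to decompose the space cost of the \yesno \sysname into (i) the base AQF capacity for the $n$ elements of $Y$, and (ii) the reserved adaptivity budget $\calA(n,m,\epsilon)$, add the two, and then simplify using the relation $\mu = \epsilon m / n$.

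First, from the AQF analysis (\Cref{sec:implementation}), a base AQF with capacity $n$ occupies $(1+o(1))\,n\log(1/\epsilon) + O(n)$ bits, since each slot stores a remainder of $\log(1/\epsilon)$ bits together with a constant number of metadata bits and the load factor is bounded away from $1$. By the construction of the \yesno \sysname (Step~1), the reserved adaptivity space is
\[
\calA(n,m,\epsilon) = (1+o(1))\,n\log(1+\mu) + O(n).
\]
Adding these two contributions, the total space is at most $(1+o(1))\,n\log(1/\epsilon) + (1+o(1))\,n\log(1+\mu) + O(n)$.

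The remainder of the argument is pure arithmetic. I would absorb the two $(1+o(1))$ factors into one and use the identity $(1+\mu)/\epsilon = 1/\epsilon + m/n$ to merge the two logarithms, obtaining a bound of $(1+o(1))\,n\log\!\big(1/\epsilon + m/n\big) + O(n)$ bits. Then the elementary inequality $a+b \le 2\max\{a,b\}$ gives
\[
\log\!\big(1/\epsilon + m/n\big) \le 1 + \log\max\{1/\epsilon,\, m/n\},
\]
and after multiplying by $(1+o(1))\,n$ the extra $+n$ is absorbed into the $O(n)$ term, yielding the claimed bound $(1+o(1))\,n\log\max\{1/\epsilon,\, m/n\} + O(n)$.

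The only substantive input, and the only possible obstacle, is the formula for $\calA(n,m,\epsilon)$ itself, which governs how many additional slot-extensions must be budgeted. Since the proposition concerns \emph{reserved} space, this quantity is fixed by the construction, so I would take it as given here; the complementary question of whether this budget actually suffices is handled separately by \Cref{thm:blacklist-failure-probability}. Everything else in the proof is routine manipulation of logarithms and asymptotic factors, so no further obstacle is expected.
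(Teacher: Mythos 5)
Your proof is correct and follows essentially the same approach as the paper: decompose the space into the base filter cost $n\log(1/\epsilon)+O(n)$ plus the reserved adaptivity budget $\calA(n,m,\epsilon)$, then simplify using $\mu=\epsilon m/n$. The only (cosmetic) difference is in the final algebra — you merge the two logarithms into $\log(1/\epsilon+m/n)$ and bound via $a+b\le 2\max\{a,b\}$, whereas the paper case-splits on $\mu\le 1$ versus $\mu>1$; both yield the same bound.
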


\begin{proof}
The space cost is the sum of the space reserved for the remainders, plus the per-slot metadata bits, plus the space reserved for adaptivity bits. This is a total of $n \log(1/ \epsilon) + O(n) + \calA(n, m, \epsilon)$ bits. Observe that
\begin{align*}
\calA(n, m, \epsilon) &=
\left\lbrace
\begin{array}{ll}
O(n) & \text{if $\mu \leq 1$}\\
(1 + o(1))n(-\log(1/\epsilon) + \log(m/n)) + O(n) & \text{otherwise}
\end{array}
\right.
\end{align*}
Notice that $\mu \leq 1$ if and only if $1 / \epsilon \geq m/n$. Hence,
\begin{align*}
n \log(1/ \epsilon) + O(n) +& \calA(n, m, \epsilon)\\
&= \left\lbrace
\begin{array}{ll}
n \log(1/\epsilon) + O(n) & \text{if $1 / \epsilon \geq m/n$}\\
(1 + o(1))n \log(m/n) + O(n) & \text{otherwise}
\end{array}
\right.\\
&= (1 + o(1))n\log(\max\{1/\epsilon, m/n\}) + O(n).
\end{align*}
\end{proof}

Notice that the construction of the \yesno \sysname may fail if the space initially reserved is insufficient. The following theorem, the central result of this section, establishes that failure is unlikely.

\begin{theorem}
\label{thm:blacklist-failure-probability}
Suppose $\omega((\log^{3/2} n)/\sqrt{n}) \leq \mu \leq 2^{o(n)}$. Then, the number of adaptivity bits added to the \yesno \sysname is at most $\calA(n, m, \epsilon)$ with probability $1 - 1/\poly(n)$. In particular, the probability that the construction succeeds is $1 - 1 / \poly(n)$.
\end{theorem}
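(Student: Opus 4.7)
The plan is to analyze, for each $y \in Y$, the random variable $L_y$ counting the number of extension slots added to $y$'s fingerprint during construction, and then to apply concentration to $T = \sum_{y \in Y} L_y$. Writing $q = \log n$ for the quotient length and $r = \log(1/\epsilon)$ for the remainder length, each extension contributes $r + O(1)$ bits, so the total adaptivity bits are proportional to $T$. The variable $L_y$ has a clean characterization: it is the smallest $\ell \geq 0$ such that no $z \in N$ has $h(z)$ agreeing with $h(y)$ in the first $q + r(\ell+1)$ bits. A union bound over $z \in N$ then gives
\[
\Pr[L_y > \ell] \;\leq\; m \cdot 2^{-(q + r(\ell + 1))} \;=\; \mu \epsilon^\ell,
\]
and summing the tail yields
\[
\E[L_y] \;\leq\; \sum_{\ell \geq 0} \min\{1,\;\mu \epsilon^\ell\} \;=\; O\!\left(\frac{\log(1+\mu)}{r} + 1\right).
\]
Multiplying by $n$ and by the per-slot bit cost matches $\calA(n, m, \epsilon)$ in expectation, up to lower-order terms.

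For the high-probability bound, the $L_y$'s are not independent because they share the hashes of $N$, so I would decompose $T = \sum_{\ell \geq 1} T_\ell$ with $T_\ell := |\{y \in Y : L_y \geq \ell\}|$ and handle each $T_\ell$ separately. For fixed $\ell$, $T_\ell$ is a sum of $n$ Bernoulli indicators with $\E[T_\ell] \leq n\mu\epsilon^{\ell-1}$. Two indicators $\mathbb{1}_y, \mathbb{1}_{y'}$ are essentially independent whenever $h(y), h(y')$ disagree in the first $q + r\ell$ bits (an event of probability $1 - \epsilon^\ell/n$); a direct covariance computation along these lines gives $\operatorname{Var}(T_\ell) = O(\E[T_\ell])$. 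A Bernstein/Chernoff-type inequality then gives
\[
\Pr[\,|T_\ell - \E[T_\ell]| > \delta\,\E[T_\ell]\,] \;\leq\; 2\exp(-\Omega(\delta^2\,\E[T_\ell])).
\]
The hypothesis $\mu = \omega(\log^{3/2}n/\sqrt n)$ ensures $\E[T_1] = n\mu = \omega(\sqrt n \log^{3/2} n)$, large enough that a choice of $\delta = o(1)$ forces the failure probability below $1/\poly(n)$. For $\ell$ beyond a cutoff $\ell^\star = O(\log n / r)$, $\E[T_\ell] = o(1)$ and a Markov bound forces $T_\ell = 0$ w.h.p. Summing across $\ell \leq \ell^\star$ concentrates $T$ within a $(1 + o(1))$ multiplicative factor of its mean, which translates to the $\calA(n, m, \epsilon)$ bound on adaptivity bits and, in particular, to the success of the construction.

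The main obstacle is handling the pairwise dependencies among the $L_y$'s cleanly enough to extract $1 - 1/\poly(n)$ concentration. A single hash $h(z)$ can affect $\Omega(\mu)$ indicators across different $y$'s, so a naive McDiarmid argument produces bounded differences that are too large to be useful. The cleanest routes I would pursue are (i) a Bernstein-type inequality for sums of Bernoulli indicators with bounded pairwise correlation, exploiting the geometric decay of shared-prefix probabilities, or (ii) a Doob martingale revealing the hashes of $N$ one at a time, with per-step increment controlled by the geometric tail of the extension distribution. The precise threshold $\omega(\log^{3/2}n/\sqrt n)$ in the hypothesis is exactly the calibration needed to make such a tail bound deliver the stated success probability.
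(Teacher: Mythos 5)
Your expectation computation is sound and essentially matches the paper's: the paper bounds $\expect{A_2(y)} \leq 1 + \log(e) + \log(1+\mu)$ by viewing $A_2(y)$ as the maximum of $C$ independent geometric variables and applying a known bound on $\expect{M_k}$ plus Jensen, while your tail-sum $\sum_{\ell}\min\{1,\mu\epsilon^\ell\}$ is an equivalent and slightly more elementary route to the same $\log(1+\mu)$ leading term. Two accounting discrepancies are worth flagging. First, you only count $Y$--$N$ collisions; the paper also tracks the adaptivity bits forced by collisions \emph{within} $Y$ during insertion (its $A_1$), disposed of as $O(n)$ by citing Bender et al. Second, the theorem is proved for bit-granularity adaptation (the per-collision cost $F(z)=\lcp(y,z)-\log(n/\epsilon)$ is geometric with parameter $1/2$, one bit at a time), whereas your $r$-bits-per-extension model leaves an additive $O(nr)=O(n\log(1/\epsilon))$ slack from the ``$+O(1)$ extensions per element'' term; that slack is not absorbed by the $O(n)$ in $\calA(n,m,\epsilon)$, so in your model the stated bound on adaptivity bits does not quite hold.

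The genuine gap is the concentration step, which is the crux of the theorem. A covariance computation showing $\operatorname{Var}(T_\ell)=O(\expect{T_\ell})$ licenses only Chebyshev, i.e., polynomially decaying tails; the displayed bound $\prob{|T_\ell-\expect{T_\ell}|>\delta\expect{T_\ell}}\leq 2\exp(-\Omega(\delta^2\expect{T_\ell}))$ does \emph{not} follow from bounded pairwise correlations among dependent indicators --- there is no Bernstein-type inequality driven by pairwise correlation alone, since pairwise information controls only the second moment (one can have indicators with $O(\expect{T_\ell})$ variance that are all equal on a rare event, killing any exponential tail). You candidly name this obstacle but leave both escape routes unexecuted. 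The paper's resolution is to prove that the per-element adaptivity counts $A_2(y)$ are \emph{negatively associated}, via a bespoke argument (the collision counts $G(y)$ are NA; conditioned on them the $A_2(y)$'s are independent; and $A_2(y)$ is stochastically non-decreasing in $G(y)$) precisely because standard NA closure theorems do not apply here, and then to invoke Chernoff--Hoeffding for NA variables after truncating each $A_2(y)$ at $O(\log(\mu+n))$ to handle unboundedness. That negative-association argument is exactly the ingredient your sketch is missing; your layer decomposition $T=\sum_\ell T_\ell$ could plausibly be combined with it (the layer indicators are collision events of the same flavor), but as written the exponential tail is asserted rather than derived.
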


In the rest of this section, we prove this theorem. Let $A$ be the number of adaptivity bits needed after all the elements of $N$ are queried in step (5). We want to show that $A \leq \calA(n, m, \epsilon)$ with probability $1 - 1/\poly(n)$.

Let $h$ be fingerprint hash function. For any $x, y \in U$, let $\lcp(x, y)$ be the longest common prefix of $h(x)$ and $h(y)$. We decompose $A$ into two parts: Let $A_1$ and $A_2$ be the number of adaptivity bits added in steps 3 and 4, respectively. For any $y \in Y$, let $A_1(y)$ and $A_2(y)$ be the number of adaptivity bits added in step 3 and 4, respectively, to the fingerprint of $y$. Then, $A = A_1 + A_2$, and $A_i = \sum_{y \in Y} A_i(y)$.

\begin{lemma}
\label{lem:concentration-1}
We have $A_1 = O(n)$ with probability $1 - 1/\poly(n)$.
\end{lemma}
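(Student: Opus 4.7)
The plan is to upper bound $A_1$ by a sum of pairwise contributions and then apply a second-moment concentration argument.

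First I would decompose $A_1 = r \sum_{y \in Y} K_y$, where $K_y$ is the number of $r$-bit extensions appended to the fingerprint of $y$ in order to resolve fingerprint collisions among the elements of $Y$. Writing $M_y := \max_{y' \in Y \setminus \{y\}} \lcp(y, y')$, we have $K_y = \lceil (M_y - p + 1)/r \rceil$ when $M_y \geq p$ and $K_y = 0$ otherwise. Using $\max \leq \sum$,
\[
K_y \;\leq\; \sum_{y' \in Y \setminus \{y\}} C(y, y'), \qquad C(y, y') := \left\lceil \tfrac{\lcp(y, y') - p + 1}{r} \right\rceil_{+},
\]
so $A_1 \leq 2 r S$ where $S := \sum_{\{y, y'\}} C(y, y')$ is a sum over unordered pairs of distinct elements of $Y$.

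Next I would compute the first two moments of $S$. Since $h$ is uniform random, $\prob{\lcp(y, y') \geq k} = 2^{-k}$, and summing a geometric series gives $\expect{C(y, y')} = \Theta(2^{-p}) = \Theta(\epsilon/n)$ and $\expect{C(y, y')^2} = O(\epsilon/n)$. Hence $\expect{S} = O(n\epsilon)$ and $\expect{A_1} = O(nr\epsilon) = O(n)$, since $\epsilon r = \epsilon \log(1/\epsilon)$ is bounded above by a constant.

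The key step is the variance bound, for which I rely on pairwise independence of the $C(y, y')$'s across distinct unordered pairs. For disjoint pairs this is immediate from independence of the hashes; for pairs sharing one element, say $\{y_1, y_2\}$ and $\{y_2, y_3\}$, it follows because conditioning on $h(y_2)$ renders $\lcp(y_1, y_2)$ and $\lcp(y_2, y_3)$ functions of the independent hashes $h(y_1)$ and $h(y_3)$, whose conditional distributions do not depend on $h(y_2)$. All cross-covariances therefore vanish, so
\[
\operatorname{Var}(S) \;=\; \sum_{\{y, y'\}} \operatorname{Var}(C(y, y')) \;\leq\; \binom{n}{2} \cdot O(\epsilon/n) \;=\; O(n\epsilon).
\]

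Finally, Chebyshev's inequality with threshold $cn/r$ for a sufficiently large constant $c$ yields $\prob{S \geq cn/r} \leq \operatorname{Var}(S)/(cn/r - \expect{S})^2 = O(\epsilon r^2 / n)$. In the relevant regime $\epsilon = \Omega(1/\poly(n))$ we have $r = O(\log n)$, so this bound is $O(\log^2 n / n) = 1/\poly(n)$; the residual regime $\epsilon \ll 1/n$ is even easier since then the expected number of collisions is $o(1)$. Combining with $A_1 \leq 2rS$ gives $A_1 = O(n)$ with probability $1 - 1/\poly(n)$, as claimed. The main obstacle I anticipate is verifying the pairwise-independence claim for overlapping pairs; once that is in hand, the rest is a routine second-moment estimate.
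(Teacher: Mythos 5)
Your argument is correct, but it is a genuinely different route from the paper's: the paper disposes of this lemma in one line by citing Lemma~9 of Bender et al.\ (the broom filter paper), which already establishes that the adaptivity bits consumed by intra-set collisions total $O(n)$ with high probability. Your self-contained second-moment argument is sound: the reduction $A_1 \leq 2rS$ via $\max \leq \sum$ is valid (the extensions needed for $y$ are indeed governed by $\max_{y'} \lcp(y,y')$, so three-way collisions cause no trouble), the moment bounds $\expect{C(y,y')}, \expect{C(y,y')^2} = O(2^{-p}) = O(\epsilon/n)$ follow from the geometric tail of the lcp, and the pairwise-independence claim for overlapping pairs holds exactly as you argue -- conditioning on $h(y_2)$ makes $\lcp(y_1,y_2)$ and $\lcp(y_2,y_3)$ independent with conditional distributions not depending on $h(y_2)$, so the cross-covariances vanish. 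Two minor points to tighten: your case split should use an explicit threshold (say $\epsilon \geq n^{-2}$ versus $\epsilon < n^{-2}$; in the latter case the expected number of colliding pairs is $O(n\epsilon) = O(1/n)$ and Markov gives $A_1 = 0$ with probability $1 - O(1/n)$), since as written there is daylight between ``$\epsilon = \Omega(1/\poly(n))$'' and ``$\epsilon \ll 1/n$.'' Also note what each approach buys: the citation is shorter and inherits a Chernoff-strength tail (failure probability $n^{-c}$ for any desired constant $c$), whereas your Chebyshev bound yields only a fixed failure probability around $\log^2 n / n$ -- still $1/\poly(n)$ as the lemma requires, but not boostable to arbitrary polynomials without replacing Chebyshev by a stronger concentration tool. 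The trade-off is that your proof is elementary and self-contained, avoiding reliance on the external lemma.
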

\begin{proof}
This follows directly from Lemma 9 in Bender et. al.~\cite{BenderFaGo18}.
\end{proof}

We will use the following basic result about the distribution of the maximum of a collection of independent geometric random variables.

\begin{lemma}[\cite{Eisenberg2008MaxGeometric}]
\label{lem:maximum-geometric}
Let $X_1, \dots, X_k$ be independent geometric random variables with parameter $1/2$. Let $M_k := \max_i X_i$. Then,
\[
0 \leq \expect{M_k} - \log(e) H_k \leq 1,
\]
where $H_k$ is the $k$-th harmonic number.
\end{lemma}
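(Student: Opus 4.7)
The plan is to use the tail-sum formula for $\expect{M_k}$ and compare the resulting series to its continuous analogue via a monotonicity argument. First I would write, using the convention $\Pr[X_i = j] = 2^{-j}$ for $j \geq 1$ (for which $\expect{X_i}=2$ and the claimed bounds are nonvacuous already at $k=1$),
\[
\expect{M_k} = \sum_{j=1}^{\infty} \Pr[M_k \geq j] = \sum_{j=0}^{\infty}\bigl(1 - \Pr[X_1 \leq j]^k\bigr) = \sum_{j=0}^{\infty} f(j),
\]
where $f(t) := 1 - (1 - 2^{-t})^k$. Note that $f(0)=1$, that $f$ is smooth and strictly decreasing on $[0,\infty)$, and that $f(t) \to 0$ as $t\to\infty$, so $f$ is in particular Riemann integrable.

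The key step is to evaluate the continuous integral $I := \int_0^{\infty} f(t)\,dt$ and show that $I = \log(e)\, H_k$. I would substitute $s = 2^{-t}$, so that $dt = -ds/(s \ln 2)$ and $I$ becomes $\tfrac{1}{\ln 2} \int_0^1 \tfrac{1 - (1-s)^k}{s}\,ds$. The inner integral is the classical integral representation of the harmonic number, $\int_0^1 \tfrac{1 - (1-s)^k}{s}\,ds = H_k$, provable by expanding $(1-s)^k$ via the binomial theorem and integrating term by term, which yields the well-known identity $H_k = \sum_{i=1}^k (-1)^{i+1}\binom{k}{i}/i$. Hence $I = H_k / \ln 2 = \log(e)\, H_k$.

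Finally I would sandwich the sum against $I$ using monotonicity of $f$. The standard integral-test inequalities for a decreasing nonnegative function give
\[
\sum_{j=1}^{\infty} f(j) \;\leq\; I \;\leq\; \sum_{j=0}^{\infty} f(j).
\]
The right-hand inequality is exactly $\expect{M_k} \geq \log(e)\, H_k$, establishing the lower bound. The left-hand one gives $\expect{M_k} = f(0) + \sum_{j=1}^{\infty} f(j) \leq 1 + I = 1 + \log(e)\, H_k$, establishing the upper bound. The only non-routine step is the integral evaluation in the second paragraph: spotting the substitution $s = 2^{-t}$ and recognizing the resulting harmonic-number integral. Everything else---tail sums, monotonicity of $f$, and the integral test---is standard, and the fact that the additive gap in the upper bound is exactly the boundary term $f(0)=1$ explains why the bound is precisely $1$ and not some larger constant.
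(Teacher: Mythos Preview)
Your proof is correct. The tail-sum identity, the substitution $s=2^{-t}$ reducing the integral to the classical $\int_0^1 \frac{1-(1-s)^k}{s}\,ds = H_k$, and the integral-test sandwich are all valid, and the boundary term $f(0)=1$ indeed accounts exactly for the additive $1$ in the upper bound.

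As for comparison: the paper does not give its own proof of this lemma. It is stated as a cited result (Eisenberg~2008) and used as a black box in the proof of \Cref{lem:expectation-2}. So there is nothing in the paper to compare your argument against; you have simply supplied a clean, self-contained proof where the paper defers to the literature. Your approach is in fact the standard one for this type of result and is essentially what one finds in the cited reference.
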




\begin{lemma}
\label{lem:expectation-2}
$\expect{A_2} \leq n \left(1 + \log(e) + \log\left(1 + \mu\right)\right)$
\end{lemma}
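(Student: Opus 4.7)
The plan is to bound $\expect{A_2(y)}$ uniformly for each $y \in Y$ and then sum over $Y$ by linearity of expectation. The quantity $A_2(y)$ is precisely the length of the extension that must be appended to $y$'s fingerprint so that no element of $N$ collides with it after step 5. I would first identify $A_2(y)$ as the maximum, over all $z \in N$ whose hash shares $y$'s initial $\log(n/\epsilon)$-bit fingerprint, of the number of additional hash bits required to distinguish $h(y)$ from $h(z)$; if no such $z$ exists, $A_2(y) = 0$.

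Next I would introduce the random variable $K_y$ counting the number of $z \in N$ whose fingerprint collides with $y$'s initial fingerprint. Treating $h$ as a random function, each element of $N$ collides independently with probability $2^{-\log(n/\epsilon)} = \epsilon/n$, so $\expect{K_y} = \epsilon m/n = \mu$. Conditional on a collision with a particular $z$, the number of hash bits on which $h(y)$ and $h(z)$ agree beyond the initial fingerprint is geometric with parameter $1/2$, so the number of extra bits that must be appended to separate them is itself $\mathrm{Geom}(1/2)$. Since the hashes of distinct inputs are independent, these geometric random variables are independent across the colliding $z$'s, and so conditional on $K_y$, $A_2(y)$ has the distribution of $M_{K_y}$ from Lemma~\ref{lem:maximum-geometric} (with the convention $M_0 = 0$).

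I would then invoke Lemma~\ref{lem:maximum-geometric} to get $\expect{M_{K_y} \mid K_y} \leq 1 + \log(e)\, H_{K_y}$ on $\{K_y \geq 1\}$, combine it with $H_k \leq 1 + \ln k$, and use the identity $\log(e)\cdot \ln k = \log k$ to obtain $\expect{M_{K_y} \mid K_y} \leq 1 + \log(e) + \log K_y$ whenever $K_y \geq 1$. The main technical step is then to pass cleanly to an unconditional bound: the inequality $\log(K_y)\cdot \mathbf{1}[K_y \geq 1] \leq \log(1 + K_y)$ holds pointwise (both sides vanish at $K_y = 0$ and $\log K_y \leq \log(1 + K_y)$ otherwise), so taking expectations gives $\expect{A_2(y)} \leq 1 + \log(e) + \expect{\log(1 + K_y)}$. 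Applying Jensen's inequality to the concave function $t \mapsto \log(1+t)$ then yields $\expect{\log(1+K_y)} \leq \log(1 + \expect{K_y}) = \log(1 + \mu)$.

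Summing the resulting bound $\expect{A_2(y)} \leq 1 + \log(e) + \log(1+\mu)$ over the $n$ elements of $y \in Y$ gives the claim. The only nontrivial step is the passage from the conditional maximum-geometric bound to an unconditional one without running into $\log 0$; the $\log(1 + K_y)$ trick together with Jensen's inequality handles this, and everything else is routine bookkeeping.
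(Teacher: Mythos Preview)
Your proposal is correct and follows essentially the same argument as the paper's proof: condition on the number of soft collisions (your $K_y$, the paper's $C$), identify $A_2(y)$ as the maximum of that many independent $\mathrm{Geom}(1/2)$ variables, apply Lemma~\ref{lem:maximum-geometric} with $H_k \le 1 + \ln k$, absorb the $K_y=0$ case via $\log(1+K_y)$, and finish with Jensen and linearity. The only differences are notational and in how explicitly you treat the $K_y=0$ boundary case.
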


\begin{proof}
We say that $z \in N$ has a \defn{soft collision} with $y \in Y$ if the baseline fingerprints of $y$ and $z$ match, that is, $\lcp(y, z) \geq \log(n/\epsilon)$.



Fix an arbitrary $y \in Y$. We claim that $\expect{A_2(y)} \leq 1 + \log(e) + \log\left(1 + \mu\right)$. Let $C$ be the (random) numbers of $z \in N$ that have a soft collision with $y$. For each $z$ that has a soft collision with $y$, let $F(z) := \lcp(y, z) - \log(n/\epsilon)$; this is the smallest number of adaptivity bits that $y$ must have to fix a false positive $z$. Then,\\ $A_2(y) = \max\limits_{\text{$z$ has a soft collision with $y$}} F(z)$.

Observe that $F(z)$ is a geometric random variable with parameter $1/2$. Because all fingerprints are independent, the $F(z)$'s are independent. This is true even if $C$ is known. Thus, $A_2(y)$ conditioned on $C = k$ is identically distributed as the maximum of $k$ independent geometric random variables with parameter $1/2$. Call this maximum $M_k$. Then,
\[
\expect{A_2(y) \mid C = k} = \expect{M_k}.
\]
For $k = 0$, we have $\expect{A_2(y) \mid C = k} = 0$. For $k \geq 1$,
\begin{align*}
    \expect{A_2(y) \mid C = k} &\leq 1 + \log(e) H_k \tag{by \Cref{lem:maximum-geometric}}\\
    &\leq 1 + \log(e)(1 + \ln k) \tag{as $H_k \leq 1 + \ln k$ for $k \geq 1$}\\
    &= 1 + \log(e) + \log k.
\end{align*}
Hence, $\expect{A_2(y) \mid C} \leq 1 + \log(e) + \log(1 + C)$. Then,
\begin{align*}
    \expect{A_2(y)} &= \expect{\expect{A_2(y) \mid C}} \tag{by the tower rule}\\
    &\leq \expect{1 + \log(e) + \log(1 + C)}\\
    &\leq 1 + \log(e) + \log(1 + \expect{C}). \tag{by linearity of expectation and Jensen's inequality}
\end{align*}
Notice that $C = \sum_{z \in N} C(z)$, where $C(z)$ is an indicator random variable that is $1$ exactly when $z$ has a soft collision with $y$. Recall that $\expect{C(z)} = \epsilon / n$. By linearity of expectation, $\expect{C} = \epsilon m/n = \mu$, and, finally,
\[
\expect{A_2(y)} \leq 1 + \log(e) + \log\left(1 + \mu\right).
\]
\noindent
This concludes the proof of the claim. The lemma follows by summing over all $y \in Y$, and using linearity of expectation.
\end{proof}

To simplify notation, let $\mu = \epsilon m/n$ be the mean number of queries that have a soft collision with any fixed $x \in S$.

\begin{lemma}
\label{lem:concentration-2}
Suppose $\omega((\log^{3/2} n) / \sqrt{n}) \leq \mu \leq 2^{o(n)}$. Then, $A_2 \leq (1 + o(1)) \expect{A_2}$ with probability $1 - 1 / \poly(n)$.
\end{lemma}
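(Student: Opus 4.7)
The plan is to establish concentration of $A_2 = \sum_{y \in Y} A_2(y)$ around its expectation via Bernstein's inequality. The main obstacle is that the summands $A_2(y)$ are not mutually independent: if two $y, y' \in Y$ happen to share their baseline fingerprint, their $A_2$ values are correlated through their common soft-collision set inside $N$. I decouple this dependence by conditioning on the family of hashes $\mathcal H_N := \{h(z) : z \in N\}$. Once $\mathcal H_N$ is fixed, each $A_2(y)$ is a deterministic function of $h(y)$ alone, and because the $h(y)$'s are mutually independent, the family $(A_2(y))_{y \in Y}$ is conditionally independent given $\mathcal H_N$.

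\textbf{Conditional tail bound and truncation.} For any fixed $\mathcal H_N$ and $t \geq 0$, a union bound over the $m$ elements of $N$ (each matching a prescribed $(\log(n/\epsilon) + t)$-bit prefix of $h(y)$ with probability $\epsilon 2^{-t}/n$) yields $\Pr[A_2(y) \geq t \mid \mathcal H_N] \leq \mu \cdot 2^{-t}$, a bound that is pointwise in $\mathcal H_N$. Choosing $B := c(\log n + \log(1 + \mu))$ with $c$ a sufficiently large constant gives $\Pr[A_2(y) > B] \leq n^{-c}$, so union-bounding over $y \in Y$ ensures $\max_y A_2(y) \leq B$ with probability $1 - 1/\poly(n)$. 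Define $\tilde A_2 := \sum_y \min(A_2(y), B)$; on this high-probability event $\tilde A_2 = A_2$.

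\textbf{Applying Bernstein and tuning the slack.} Summing the tail bound against $2t-1$ yields the pointwise second-moment estimate $\expect{A_2(y)^2 \mid \mathcal H_N} = O(\mu + \log^2(1+\mu))$. Because the truncated variables $\min(A_2(y), B)$ are conditionally independent, $[0,B]$-valued, and have conditional variance at most this second moment, Bernstein's inequality gives $\Pr[\tilde A_2 \geq \expect{\tilde A_2 \mid \mathcal H_N} + \delta \mid \mathcal H_N] \leq \exp(-\Omega(\delta^2/(n(\mu + \log^2(1+\mu)) + B\delta)))$. Set $\delta = \epsilon' \expect{A_2}$ for a vanishing $\epsilon' = o(1)$; using $\expect{A_2} = \Theta(n\mu)$ for $\mu \leq 1$ and $\Theta(n\log(1+\mu))$ for $\mu \geq 1$, this failure probability is $1/\poly(n)$. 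Finally, $\expect{\tilde A_2 \mid \mathcal H_N} \leq \expect{A_2 \mid \mathcal H_N} \leq n(1 + \log e + \log(1+\mu))$ pointwise---the same upper bound proved in \Cref{lem:expectation-2}---so the conditional expectation can be replaced by $\expect{A_2}$ at multiplicative cost $1 + o(1)$, yielding $A_2 \leq (1 + o(1))\expect{A_2}$ on the intersection of the good events.

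\textbf{The main obstacle.} The delicate step is calibrating the Bernstein exponent to exceed $\omega(\log n)$ uniformly over the admissible range $\omega(\log^{3/2} n/\sqrt n) \leq \mu \leq 2^{o(n)}$. In the small-$\mu$ regime (where $\log(1+\mu) \approx \mu$), the Bernstein exponent scales like $\epsilon'^2 n \mu$, so requiring this to be $\omega(\log n)$ while keeping $\epsilon' = o(1)$ forces precisely the $\omega(\log^{3/2} n/\sqrt n)$ lower bound on $\mu$. The upper bound $\mu \leq 2^{o(n)}$ controls the range $B = O(\log n + \log \mu) = o(n/\log n)$, ensuring the linear-in-$\delta$ term in the Bernstein denominator is dominated by the variance term.
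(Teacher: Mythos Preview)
Your conditioning route differs from the paper's approach, which argues that the $A_2(y)$'s are negatively associated and then applies Chernoff--Hoeffding for NA sums (with the same truncation trick you use). The main gap is in your last step. Bernstein concentrates $\tilde A_2$ around the \emph{conditional} mean $\expect{\tilde A_2 \mid \mathcal H_N}$, so you still need $\expect{\tilde A_2 \mid \mathcal H_N} \leq (1+o(1))\expect{A_2}$. Your proposed pointwise replacement---whether by summing the union-bound tail $\min(1,\mu 2^{-t})$ or by quoting \Cref{lem:expectation-2}---can overshoot $\expect{A_2}$ by a fixed constant factor when $\mu = \Theta(1)$: e.g.\ at $\mu = 1$ the tail sum gives $\sum_{t\geq 1} 2^{-t}=1$, whereas $\expect{A_2(y)} \approx \sum_{t\geq 1}(1-e^{-2^{-t}}) < 0.86$. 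Thus as written you obtain only $A_2 \leq C\,\expect{A_2}$ for some constant $C>1$, not $(1+o(1))\expect{A_2}$. The paper's NA argument avoids this entirely because the concentration is directly around the unconditional mean $\expect{A_2'} \leq \expect{A_2}$, with no intermediate conditional quantity to control. To rescue your approach you would need a separate high-probability bound showing $\expect{A_2 \mid \mathcal H_N} \leq (1+o(1))\expect{A_2}$ over the randomness of $\mathcal H_N$ itself.

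A smaller issue: in the small-$\mu$ regime your Bernstein exponent is $\Theta((\epsilon')^2 n\mu)$, and requiring it to be $\omega(\log n)$ with $\epsilon'=o(1)$ only forces $\mu=\omega(\log n/n)$, strictly weaker than the stated $\omega(\log^{3/2} n/\sqrt n)$. That stronger constraint is what falls out of Hoeffding (exponent $\Theta(\delta^2/(nB^2))$), which is the inequality the paper actually uses; your variance-aware Bernstein would relax the hypothesis, not reproduce it exactly.
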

\begin{proof}[Proof sketch]
Recall that $A_2 = \sum_{y \in Y} A_2(y)$. The proof is divided into two parts. First, we show that the random variables $A_2(y)$ are negatively associated (NA). Roughly speaking, this is because when some query from $N$ is a false positive due to fingerprint match with $y \in Y$, then the new adaptivity bits make following queries less likely to cause a false positive on $y$. Though intuitive, proving that the $A_2(y)$'s are NA is challenging because we can't directly apply standard theorems on negative association~\cite{Wajc2017NA}.

For each $y \in Y$, let $G(y)$ be the number of $z \in N$ such that $\lcp(y, z) \geq \log(n/\epsilon) + A_1(y)$, that is, elements that can cause a false positive due to a fingerprint match with $y$ (and only with $y$). The argument rests on the following four properties of the random variables involved:
\begin{enumerate}
    \item The $G(y)$'s are NA random variables.
    \item Conditioned on the $G(y)$'s, the $A_2(y)$'s are independent.
    \item The probability that $A_2(y)$ is large is a non-decreasing function of the $G(y)$. Formally, for every $\ell$, $\prob{A_2(y) \geq \ell \mid G(y) = k}$ is a non-decreasing function of $k$.
    \item Conditioned on $G(y)$, the random variable $A_2(y)$ is independent of the $G(y')$'s with $y' \neq y$.
\end{enumerate}

Once negative association of the $A_2(y)$'s is established, we are almost in the conditions of the Chernoff-Hoeffding inequality for the sum $\sum_{y \in Y} A_2(y)$. Unfortunately, there is one hypothesis that is not met, namely they $A_2(y)$'s are not deterministically bounded---in the worst case, an unbounded number of adaptivity bits may need to be added to some element $y$. This, however, is unlikely, as $A_2(y) = O(\log(\mu + n))$ with probability $1 - 1/\poly(n)$; this is by a Chernoff bound, and a tail bound on the geometric distribution. We can put this observation to work and circumvent the boundedness requirement of Chernoff-Hoeffding, using the following truncation trick: We define $A_2'(y) := \min\{A_2(y), O(\log(\mu + n))\}$, and apply the Chernoff-Hoeffding bound on the truncated sum $A_2' := \sum_{y \in Y} A_2'(y)$. Once concentration around the mean is established on $A_2'$, we conclude the proof by showing that, with high probability, no truncation is actually done, so the analysis on $A_2'$ applies to $A_2$ most of the time. Specifically:
\begin{enumerate}
    \item $A_2 = A_2'$ with probability $1 - 1/\poly(n)$;
    \item $\expect{A_2'} \leq \expect{A_2}$.
\end{enumerate}
\end{proof}

\begin{proof}[Proof of \Cref{thm:blacklist-failure-probability}]
The construction of the filter succeeds if and only if $A \leq \calA(n, m, \epsilon)$. Since $A = A_1 + A_2$, and by \Cref{lem:concentration-1}, \Cref{lem:expectation-2} and \Cref{lem:concentration-2} we have $A \leq O(n) + (1 + o(1))(n (1 + \log(e) + \log(1 + \mu))) = O(n) + (1 + o(1)) n \log(1 + \mu) = \calA(n, m, \epsilon)$, with probability $1 - 1/\poly(n)$.
\end{proof}

\subsection{Lower Bound for \YesNo Filters}

A lower bound for this problem was sketched out in Reviriego et al.~\cite{Reviriego2021BlacklistFilter}, but without a rigorous proof. Moreover, the lower bound as stated in that work is hard to compare against our upper bound using \sysname; here, we give an equivalent but more condensed lower bound.

\begin{theorem}
\label{thm:lower-bound}
Suppose $u \geq c(n^2/\epsilon + m^2)$, for some large enough constant $c > 0$, and $\epsilon \leq 1/2$. Then, the number of bits used by a static \yesno filter is at least
\[
n \log\left(\max\left\lbrace\frac{1}{\epsilon}, \frac{m}{n}\right\rbrace\right) + \log(e) \min\{\epsilon m, n\} + O(1).
\]
\end{theorem}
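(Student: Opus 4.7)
The plan is an information-theoretic encoding argument. Let $F$ denote the random output of the filter on input $(Y,N)$, incorporating any internal randomness $R$, and let $A(F)\subseteq U$ be its accept set; by correctness $Y\subseteq A(F)$ and $A(F)\cap N=\emptyset$. Write $s$ for the worst-case space, so $H(F)\leq s$.

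The first step bounds $\expect{|A(F)|}$. For any fixed $(Y,N)$ and any $x\notin Y\cup N$, the filter's guarantee gives $\prob{x\in A(F)}\leq \epsilon$. Summing over such $x$ and using $|A(F)\setminus Y|=|A(F)\cap(U\setminus(Y\cup N))|$ (which holds since $A(F)\cap N=\emptyset$ and $Y\subseteq A(F)$), this yields $\expect{|A(F)|}\leq n+\epsilon(u-n-m)$ for every admissible pair. Averaging also over a uniformly random admissible $(Y,N)$, let $\bar a$ be the resulting overall expectation; then $\bar a\leq n+\epsilon(u-n-m)$.

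Now run the encoding argument on a uniformly random $(Y,N)$. Given $F$, $Y$ is constrained to an $n$-subset of $A(F)$ and $N$ to an $m$-subset of $U\setminus A(F)$, so $H((Y,N)\mid F)\leq \expect{\log\binom{|A(F)|}{n}\binom{u-|A(F)|}{m}}$. Combined with $H(Y,N)=\log\binom{u}{n}\binom{u-n}{m}$ and $H(Y,N)\leq H(F)+H((Y,N)\mid F)\leq s+\expect{\log\binom{|A|}{n}\binom{u-|A|}{m}}$, this is the rearrangement I want. Applying the Stirling-sharp bounds $\binom{a}{n}\leq (ea/n)^n/\sqrt{2\pi n}$ and $\binom{u-a}{m}\leq (e(u-a)/m)^m/\sqrt{2\pi m}$, followed by concavity of $\log$ via Jensen,
\[
\expect{\log\binom{|A|}{n}\binom{u-|A|}{m}} \leq n\log\tfrac{e\bar a}{n}+m\log\tfrac{e(u-\bar a)}{m}-\tfrac{1}{2}\log(4\pi^2 nm).
\]
Stirling on the other side gives $\log\binom{u}{n}\binom{u-n}{m}\geq n\log(u/n)+m\log(u/m)+(n+m)\log e-\tfrac{1}{2}\log(4\pi^2 nm)-O(1)$, where the hidden $O(1)$ absorbs the $O(n^2/u+nm/u+m^2/u)$ Stirling tails under the hypothesis $u\geq c(n^2/\epsilon+m^2)$ (which, by AM--GM, also controls $nm/u$). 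The $\tfrac12\log(nm)$ correction cancels between the two sides, leaving
\[
s \;\geq\; n\log(u/\bar a)+m\log(u/(u-\bar a))-O(1) \;=:\; g(\bar a)-O(1).
\]

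The last step is to minimize $g$ over $\bar a\in[n,\,n+\epsilon(u-n-m)]$. The function $g$ is convex in $\bar a$ with unconstrained minimum at $\bar a^\star=nu/(n+m)$, and under the hypothesis on $u$, $\bar a^\star$ lies inside the interval iff $\epsilon(n+m)\geq n$ (morally, $\epsilon m\geq n$). In that regime $g(\bar a^\star)=n\log(1+m/n)+(n+m)\log(1+n/m)\geq n\log(m/n)+n\log e$ by the elementary inequality $(1+x)\ln(1+x)\geq x$; here $\max\{1/\epsilon,m/n\}=m/n$ and $\min\{\epsilon m,n\}=n$, matching the claim. In the complementary regime $g$ is decreasing on the interval, so the minimum is at the right endpoint $n+\epsilon(u-n-m)$; using $-\log(1-\epsilon)\geq\epsilon\log e$ for $\epsilon\leq 1/2$ and letting the hypothesis on $u$ absorb the lower-order terms, this evaluates to at least $n\log(1/\epsilon)+\epsilon m\log e$, again matching. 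The main obstacle I expect is the careful $O(1)$ bookkeeping: verifying that the $\tfrac12\log(nm)$ Stirling corrections really do cancel between the upper and lower bounds, and that $u\geq c(n^2/\epsilon+m^2)$ uniformly absorbs the remaining Stirling error, so that a single clean constant works across both regimes and through the transition $\epsilon m\approx n$.
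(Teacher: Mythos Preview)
Your argument is correct and follows essentially the same information-theoretic route as the paper: both reduce to lower-bounding $\log\bigl(|\calI|/c\bigr)$ by $n\log(u/a)+m\log(u/(u-a))$ for some accept-set size $a$ constrained to $[n,\,n+\epsilon(u-n-m)]$, and then perform the identical convex minimization with the same two-regime case split at $\epsilon m\lessgtr n$. The only presentational difference is that the paper fixes, via averaging, a single instance with error fraction $\epsilon'\leq\epsilon$ and counts compatible inputs combinatorially, whereas you average over all randomness and inputs and apply Jensen; these are equivalent formulations. (One small typo: your displayed $g(\bar a^\star)$ should read either $n\log(1+m/n)+m\log(1+n/m)$ or, after regrouping, $n\log(m/n)+(n+m)\log(1+n/m)$; the inequality you invoke then applies cleanly to the latter form.)
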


Before diving into the proof let us briefly discuss the lower bound. Dividing by $n$, we have that a \yesno filter uses at least
\[
\log\left(\max\left\lbrace 1/\epsilon, m/n\right\rbrace\right) + O(1) = \log(1/\epsilon) + \log(\max\{\epsilon m/n, 1\}) + O(1)
\]
bits per \yes element. For comparison, traditional filters have an information-theoretical lower bound of $\log(1/\epsilon) + O(1)$ bits per element. This can be interpreted as follows: When building a \yesno filter, we need to (i) record the $n$ \yes list elements while ensuring that at most an $\epsilon$ fraction of all other elements are incorrectly reported as present, and to (ii) record the $m$ \no elements. To accomplish (i) we need at least $\log(1/\epsilon)$ bits per element, just like a regular filter. The number of additional bits needed to accomplish (ii) depends on $\mu := \epsilon m/n$, which is the number of \no false positives per \yes element. When $\mu \leq 1$, only a small constant number of extra bits per \no element are needed; when $\mu > 1$, $\log(\mu)$ extra bits per \no element are needed. 


\begin{proof}[Proof sketch]
Let $\calI = \{(Y, N) \mid Y, N \subseteq U, Y \cap N = \emptyset, |Y| = n, |N| = m\}$ be the set of all inputs of the problem. Consider a static \yesno filter with false positive probability $\epsilon$, that uses at most $b$ bits on all inputs $(Y, N) \in \calI$. For any random bit string $r$, denote $\textsf{Inst}(Y, N, r)$ the instance of the data structure with randomness $r$ on input $(Y, N)$. We say that an instance $F$ \defn{represents} input $(Y, N)$ if there exists a string of random bits $r$ such that $F = \textsf{Inst}(Y, N, r)$.

The proof uses the following standard information-theoretic argument: Let $\calF$ be a subset of instances of the data structure, such that
\begin{enumerate}
    \item every input is represented by some instance of $\calF$, and
    \item for every $F \in \calF$, $F$ represents at most $c$ inputs.
\end{enumerate}
\noindent
When $\calF$ satisfies property (1), we say it \defn{covers} $\calI$; when it satisfies (2), we say it has \defn{covering parameter} $c$. Then, $|\calF| \geq |\calI| / c$. Because all instances from $\calF$ must be encoded by some state of the data structure, we have $2^b \geq |\calF|$, which in turn implies the space lower bound $b \geq \log |\calF| \geq \log(|\calI| / c)$. The challenge is finding a cover with small covering parameter $c$.

Fix any input $(Y_1, N_1)$. Let $\calF$ be the set of instances that err on at most a fraction $\epsilon$ of $\overline{U} := U \setminus (Y_1 \cup N_1)$. Let $F_r = \textsf{Inst}(Y_1, N_1, r)$. Because the \yesno filter errs with probability $\epsilon$, we have, for every $z \in \overline{U}$,
\[
\Pr_r\left[\text{$F_r$ errs on $z$}\right] \leq \epsilon.
\]
Summing over $z$, we find that
\[
\E_r\left[\text{number of $z \in \overline{U}$ on which $F_r$ errs}\right] \leq \epsilon |\overline{U}|.
\]
Hence, there exists some $r$ such that $F_r$ errs on less than a fraction $\epsilon$ of $\overline{U}$. Because this instance $F_r$ represents $(Y_1, N_1)$, this implies that there exists some instance from $\calF$ that represents $(Y_1, N_1)$, that is, $\calF$ covers $\calI$.

The number of inputs $(Y, N) \in \calI$ is
\[
|\calI| = {u \choose n}{u-n \choose m}.
\]
\noindent
We claim that $\calF$ has covering parameter
\begin{equation}
\label{eq:covering}
c \leq \min_{\epsilon' \in (0, \epsilon]} {n + \epsilon'(u - n - m) \choose n}{u - (n + \epsilon'(u - n - m)) \choose m}.
\end{equation}
\noindent
To see this, consider an $F \in \calF$ and let $(Y, N) \in \calI$ represented by $F$. Then, $c$ is upper bounded by the number of inputs $(Y', N')$ such that $Y' \subseteq \{x \in U \mid \text{$F$ answers \yes on $x$}\}$ and $N' \subseteq \{x \in U \mid \text{$F$ answers \no on $x$}\}$; these are all the inputs compatible with the answers of $F$. The first and second terms on the right-hand side of \Cref{eq:covering} are the number of ways to choose $Y'$ and $N'$, respectively, when $F$ errs on a fraction $\epsilon' \in (0, \epsilon]$ of elements from $\overline{U}$. Notice that the right-hand side of \Cref{eq:covering} is \emph{not} non-decreasing in $\epsilon'$, so the minimum is not necessarily attained at $\epsilon' = \epsilon$.

A lengthy calculation, which we omit here due to space constraints, shows that
\[
\log(|\calI|/c) \geq \min_{\epsilon' \in (0, \epsilon]} n\log(1/\epsilon') + \epsilon' m \log(e) + O(1).
\]
Finally, an analysis of the function $f(\epsilon') = n \log(1/\epsilon') + \epsilon' m \log(e) + O(1)$ with $\epsilon' \in (0, \epsilon]$ reveals that
\begin{itemize}
	\item if $\epsilon \leq n/m$, then $f$ attains its minimum at $\epsilon' = \epsilon$;
	\item if $\epsilon > n/m$, then $f$ attains its minimum at $\epsilon' = n/m$. 
\end{itemize}
Hence,
\begin{align*}
\log(|\calI|/c) \geq &\left\{
\begin{array}{cc}
n \log(1/\epsilon) + \epsilon m \log(e) + O(1) & \text{if $\epsilon \leq n/m$}\\
n \log(m/n) + n \log(e) + O(1) & \text{otherwise}
\end{array}
\right.\\
&= n \log(\max\{1/\epsilon, m/n\}) + \log(e) \min\{\epsilon m, n\} + O(1).
\end{align*}

\end{proof}


\section{Evaluation}\label{sec:evaluation}

The goal of the evaluation is to answer the following questions regarding the performance of the \sysname:
\begin{enumerate}[noitemsep, leftmargin=*]
    \item How does the insertion and query performance of the \sysname compare to other adaptive and non-adaptive filters?
    \item How well does the \sysname improve overall database performance, compared to other adaptive and non-adaptive filters?
    \item How much space does the \sysname use to adapt?
    \item How does the \sysname compare to prior solutions to the \yesno list problem?
    \item How does the false positive rate in the \sysname change during a dynamic workload?
    \item How fast can two \sysname instances be merged?
\end{enumerate}

\subsection{Results summary}
We compare the \sysname~\footnote{https://anonymous.4open.science/r/adaptiveqf-C438} against two state-of-the-art adaptive filters, the telescoping adaptive filter (TQF)~\cite{lee2021telescoping} and the adaptive cuckoo filter (ACF)~\cite{MitzenmacherPR20}.
We also include two non-adaptive filters, the quotient filter (QF)~\cite{PandeyBJP17} and the cuckoo filter (CF)~\cite{fan2014cuckoo}, as baselines to understand the overheads and benefits of adaptivity. The quotient and cuckoo filter are chosen as baselines as these are the filters upon which the adaptive filters used in our evaluation are developed.

We found that adaptivity is an extremely efficient way to reduce the false-positive rate of a filter. For example, on a Zipfian query workload, the \sysname is able to reduce the false-positive rate by about 100$\times$ for an additional cost of less than 1/1000th of a bit per item.  In contrast, a non-adaptive filter would need 7 bits per item to achieve the same false-positive rate reduction.

Absent any system, the \sysname has comparable space usage to the other filters.  For example, the \sysname uses more space than the cuckoo filter, but only by 1\%. 
Most notably, the \sysname performs at par with the quotient filter on which it is based, indicating little to no overhead for its adaptivity. On the other hand, the adaptive cuckoo filter and telescoping adaptive filter are significantly slower than their respective non-adaptive counterparts.

However, when the cost of a false positive is increased by including an on-disk database, the benefits of adaptivity become apparent. For example, when used to filter queries from a fixed dataset, the \sysname is able to learn the query set, seeing 10$\times$ fewer false positives over 200 million queries than the quotient filter and the cuckoo filter. This resulted in 4-7$\times$ faster queries. Furthermore, using the \sysname to filter queries in a B-tree databases had query throughput that was impervious to an adversarial query workload, whereas the throughput of non-adaptive filters dropped about 2$\times$ with the inclusion of an adversary representing a mere 1\% of queries.

Although the benefits of maintaining a low false positive rate during queries are shared by all three adaptive filters, the inclusion of a database reveals that the \sysname is much faster than other adaptive filters during insertions. The systems using the TQF and the ACF slow down significantly as the filters fill up due to frequent modifications of their on-disk backing stores. Between 85-90\% fullness, the \sysname averages 5$\times$ the insertion throughput of the ACF and 30$\times$ that of the TQF.

\para{Other results}
The \sysname matches the rate of change of false-positive rate during queries from a real-world dataset compared to other adaptive filters. The \sysname solves dynamic \yesno problems with changing sets. The \sysname supports fast merges and bulk insertions. Finally, despite being dynamic, the \sysname achieves similar or better space usage compared to purpose-build solutions for the static \yesno problem.

\subsection{Experimental setup}
One challenge we face is that the filters do not all support the same false-positive rates.  
%
Thus, we pick a target false-positive rate and configure each filter
to get as close as possible to the target false-positive rate without
sacrificing performance. This is in accordance with the prior research on evaluating filters~\cite{PandeyBJP17a,FanAnKa14,pandey2021vector}. Our target false-positive rate is $2^{-9}$ ($\approx 0.2\%$) which is the commonly used in most practical system configurations~\cite{PandeyBJP17,fan2014cuckoo}.

We configure the \qf-based filters (AQF, TQF, and QF) with 9-bit remainders. We use 12-bit fingerprints and blocks of size 4 in the cuckoo filter-based filters (ACF, CF). This results in all filters having a false-positive rate of $2^{-9}$. \newtext{We use MurmurHash2~\cite{Appleby} as the hash function for all filters. The \sysname does not require any special properties in the hash function compared to other filters.}


\para{Machine specification}
All experiments were run on an Intel(R) Xeon(R) Gold 6338 CPU @ 2.00GHz with 96 MiB L3 cache. The machine has 1 TB of memory, 64 CPUs, and 4 TB of SSD-based local instance storage, 64-bit platform. We restrict our runs to a single core. 

\subsection{Microbenchmarks}

We evaluate the performance of the filters in RAM. We create the filters with $2^{27}$ (134M) slots, which makes them substantially larger than the L3 cache on the machine where experiments are performed.
We fill each filter to 90\% load factor\footnote{Load factor is the ratio of the number of occupied slots over the total number of slots in the filter} and report the performance of the filter as a function of load factor.
Although all of the filters evaluated in our benchmarks support up to 95\% load factor, we restrict them to 90\% in order to give them room to store any additional data needed to adapt.

\para{Speed}
We evaluate adaptive filter performance on two fundamental operations: insertions and lookups. We evaluate insertions on uniformly random 64-bit keys, and lookups on both uniform-random and Zipfian distributions~\cite{corominasSo10}. In the Zipfian distribution, we use a Zipfian coefficient of 1.5 and a universe size of 10 million items. 
We do not count the time required to generate the input to the filters, only the time to insert and query items in the filters. This way we only measure the differences in filter performance.
We perform 200 million queries for both query distributions. The numbers reported for both insertions and queries are the average of 5 trials.

We perform the benchmarks in isolation of any overheads from the reverse maps. For the adaptive filters, which use reverse maps to obtain keys for adaptation, we pick valid arbitrary keys that will suffice in order to simulate having the reverse map present. We still measure hashing as part of the filters' performance because it is done independently of the reverse map and database.

\begin{figure}
   \captionsetup{skip=4pt}
   \begin{center}
        \ref*{insert-throughput-legend}
    \end{center}
    \begin{subfigure}{0.36\linewidth}
        \captionsetup{skip=0pt}
        \tikzsetnextfilename{insertion-throughput}
        \begin{tikzpicture}
            \begin{axis}[
                ThroughputBarGraph,
                ylabel = Ops/Sec,
                y label style = {at = {(axis description cs:0.45,0.5)}, anchor=south},
                width = 0.889\linewidth + 27pt,
            ]
            \addplot [AQFBarStyle, xshift=-0.3mm] table [ThroughputTable] {data/micro/aqf-0124-27-inserts.csv};
            \addplot [TQFBarStyle, xshift=-0.3mm] table [ThroughputTable] {data/micro/tqf-0124-27-inserts.csv};
            \addplot [ACFBarStyle, xshift=-0.3mm] table [ThroughputTable] {data/micro/acf-0124-27-inserts.csv};
            \addplot [QFBarStyle, xshift=0.3mm] table [ThroughputTable] {data/micro/qf-0124-27-inserts.csv};
            \addplot [CFBarStyle, xshift=0.3mm] table [ThroughputTable] {data/micro/cf-0124-27-inserts.csv};
         \end{axis}
         \draw (1.24,0) -- (1.24,2.45);
      \end{tikzpicture}
      \caption{Insertions}
      \label{fig:micro-insertion-throughput}
   \end{subfigure}%
   \begin{subfigure}{0.32\linewidth}
      \captionsetup{skip=0pt}
      \tikzsetnextfilename{uniform-query-throughput}
      \begin{tikzpicture}
         \begin{axis}[
               ThroughputBarGraph,
               ytick distance=20000000,
            ]
            \addplot [AQFBarStyle, xshift=-0.3mm] table [ThroughputTable] {data/micro/aqf-0124-27-queries.csv};
            \addplot [TQFBarStyle, xshift=-0.3mm] table [ThroughputTable] {data/micro/tqf-0124-27-queries.csv};
            \addplot [ACFBarStyle, xshift=-0.3mm] table [ThroughputTable] {data/micro/acf-0124-27-queries.csv};
            \addplot [QFBarStyle, xshift=0.3mm] table [ThroughputTable] {data/micro/qf-0124-27-queries.csv};
            \addplot [CFBarStyle, xshift=0.3mm] table [ThroughputTable] {data/micro/cf-0124-27-queries.csv};
         \end{axis}
         \draw (1.24,0) -- (1.24,2.45);
      \end{tikzpicture}
      \caption{Uniform Queries}
      \label{fig:micro-uniform-query}
   \end{subfigure}%
   \begin{subfigure}{0.32\linewidth}
      \captionsetup{skip=0pt}
      \tikzsetnextfilename{zipfian-query-throughput}
      \begin{tikzpicture}
         \begin{axis}[
               ThroughputBarGraph,
               ytick distance=75000000,
            ]
            \addplot [AQFBarStyle, xshift=-0.3mm] table [ThroughputTable] {data/micro/aqf-0124-27-zipfian.csv};
            \addplot [TQFBarStyle, xshift=-0.3mm] table [ThroughputTable] {data/micro/tqf-0124-27-zipfian.csv};
            \addplot [ACFBarStyle, xshift=-0.3mm] table [ThroughputTable] {data/micro/acf-0124-27-zipfian.csv};
            \addplot [QFBarStyle, xshift=0.3mm] table [ThroughputTable] {data/micro/qf-0124-27-zipfian.csv};
            \addplot [CFBarStyle, xshift=0.3mm] table [ThroughputTable] {data/micro/cf-0124-27-zipfian.csv};
         \end{axis}
         \draw (1.24,0) -- (1.24,2.45);
      \end{tikzpicture}
      \caption{Zipfian Queries}
      \label{fig:micro-zipfian-query}
   \end{subfigure}\\
   \caption{Micro operation throughput of filters absent any system. Done by simulating a sequence of operations first, then returning and performing the same operations but on the filter alone. Adaptive filters are compared on the left, while nonadaptive filters are shown as reference on the right.}
    \label{fig:micro-throughput}
\end{figure}
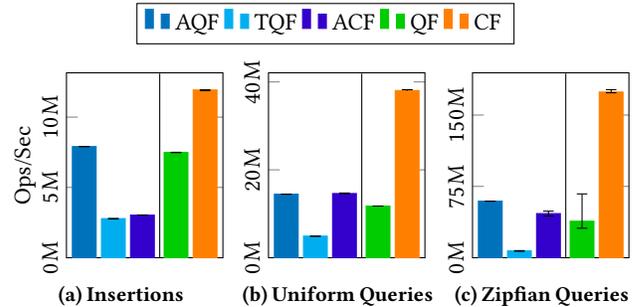

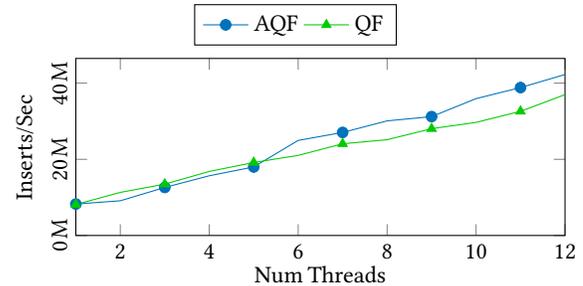
\begin{figure}
    \begin{center}
        \ref*{parallel-legend}
    \end{center}
    \begin{subfigure}{0.9\linewidth}
    \tikzsetnextfilename{parallel-inserts}
    \begin{tikzpicture}
        \begin{axis}[
            width = \linewidth + 13pt,
            height = 1.55in,
            ticks = major,
            try min ticks = 3,
            enlarge x limits = false,
            xtick scale label code/.code={},
            xticklabel = { \pgfmathprintnumber{\tick} },
            scaled y ticks = base 10:-6,
            ytick scale label code/.code={},
            yticklabel = { \pgfmathprintnumber{\tick}\hspace{0.08em}M },
            y tick label style = {rotate = 90},
            y label style = {at = {(axis description cs:0.12,0.5)}, anchor=south},
            x label style = {at = {(axis description cs:0.5,0.1)}, anchor=north},
            legend columns = -1,
            legend to name = {disk-insert-legend},
            xlabel = Num Threads,
            ylabel = Inserts/Sec,
            ymin = 0,
            legend to name = {parallel-legend},
        ]
        \addplot [AQFLineStyle] coordinates {(1, 8256526.292) (2, 9114274.957) (3, 12615243.48) (4, 15678175.46) (5, 17995763.7) (6, 24962751.35) (7, 27050802.12) (8, 30095969.5) (9, 31211133.86) (10, 35928421.52) (11, 38835647.69) (12, 42256115.62)};
        \addplot [QFLineStyle] coordinates {(1, 8137761.340027) (2, 11319296.261021) (3, 13482075.159105) (4, 16807081.851116) (5, 19167409.841285) (6, 21030817.726481) (7, 24078981.070894) (8, 25149149.613806) (9, 28037007.864366) (10, 29692500.601736) (11, 32600980.394062) (12, 36980176.247057)};
        \end{axis}
        \end{tikzpicture}
    \end{subfigure}%
    \vspace{-3mm}
    \caption{Parallel insertion throughput. $2^{26}$ slots in the filter.}
    \label{fig:parallel-inserts}
\end{figure}

\Cref{fig:micro-throughput} shows the throughput of adaptive and non-adaptive filters for insertions and queries. 
\newtext{The \sysname is based on the counting quotient filter. \sysname is not slower than the quotient filter, but it is slightly faster during both insertions and queries, indicating that the overhead of adaptivity in the \sysname is minimal.} Increased query speed may also be attributed in part to the slightly lower false positive rate resulting from adaptation. The CF has the highest insertion throughput among all the filters, which is consistent with previous research~\cite{PandeyBJP17a, FanAnKa14, pandey2021vector}. In exchange, the quotient filters offer fast resizing, mergeability, and efficient variable-length counters/values.

In contrast, there is a noticeable overhead in the ACF compared to the CF when it comes to queries due to the need to hash a given query multiple times. The CF can use a single hash function to obtain both the index and the tag of an item simultaneously. On the other hand, the tag of an item in the ACF depends on the location of the tag. This means that the ACF must first apply a hash to calculate the indexes of an item, and only then apply additional hashes to search for the tag.
Similarly, the TQF also sees a lower throughput due to the additional overhead of applying arithmetic coding to encode and decode hash selectors when making queries. Aside from queries, both the ACF and TQF have additional overhead during insertions for the same reason. As items naturally shift and move during insertions, the tags stored need to be rehashed in order to reflect their new locations.

When switching to Zipfian queries, all five filters benefit. Since the filters are easily large enough to overwhelm the machine's L3 cache, skewing the queries allows the cache to be more effective. The adaptive filters can also maintain a significantly lower false positive rate than the non-adaptive filters. However, the extra overhead in the ACF and TQF from the use of hash selectors puts a cap on how fast their queries can be, so the decreased false positive rate and increased cache friendliness have limited benefit. \newtext{The QF ends up having high variance in its zipfian query speed. This is a result of the high impact of locality in the QF, which uses linear probing in contrast to the CF. When popular items fall in small clusters, queries are fast, but if they are in larger clusters, query performance slows down. The \sysname does not see the same variance since it quickly adapts to any zipfian distribution regardless of its locality.}

\para{Space}
We evaluate the space efficiency of the filters by measuring the actual space needed to store items. We report the space efficiency at 90\% load factor. \newtext{This is space usage prior to any adaptation, so each filter contains the same number of items and uses the same number of slots.}
\Cref{tab:space-fpr} shows the empirical space usage and false-positive rate of different filters in these experiments. The space reported in the table is only the filter space. It does not include the space required by the reverse hash map. The \sysname has $\sim8-9\%$ space overhead compared to the non-adaptive quotient filter.

\para{Parallelism}
\newtext{The \sysname preserves thread safety from the counting quotient filter. It divides the slots in blocks of 4096 slots each and uses a lightweight spin lock for each block to avoid corruption.
During an insertion or an adaptation, each thread first acquires two locks on consecutive blocks, the block in which the item hashes and the next one. Two consecutive locks helps to avoid any corruption in case the shifting of items overflows into the next block.}

\newtext{It is also possible to execute mixed operations concurrently, but two modifications would be required. First, locks would also have to be acquired during queries, which would not be necessary if insertions and queries are performed in separate phases.  Second, if the database being used also supports concurrent inserts, the lock acquired during filter inserts would need to be held until the database insert is finished. This is to ensure the items in the same minirun are also inserted into the database in the same order as they are inserted into the filter. This is only necessary if there are mixed inserts and adaptations being done concurrently; in an insert-only workload, items in a minirun are identical in the filter until an adaptation happens, so the order of insertion into the database does not matter.}

\newtext{\Cref{fig:parallel-inserts} shows insertion throughput of the \sysname in isolation, as a function of the number of threads used, to demonstrate that the \sysname itself maintains good parallelism. We also show the performance of the QF for comparison. For this experiment, we use a filter of size $2^{26}$, and we configure the locks to span $2^{16}$ slots each. Therefore, there are $2^{10}$ locks and contention is low. We vary from 1 thread to 12 threads in the increments of 2. Both the \sysname and QF show almost linearly scaling with the increasing number of threads while the \sysname being slightly faster.}

\begin{table}[t]
\begin{tabular}{l l l }
\toprule
Filter & $-\log$(FPR) & Space (MB) \\
\midrule
Adaptive Quotient filter  & 9 & 203.610 \\
Telescoping Quotient filter  & 9 & 218.104 \\
Adaptive Cuckoo filter  & 9 & 201.402 \\
Quotient filter    & 9 & 186.818 \\
Cuckoo filter  & 9 & 201.401 \\
\bottomrule
\end{tabular}
\caption{\boldmath
  Empirical space usage and false-positive rate of filters used
  in the benchmarks. All filters were created with $2^{26}$ slots. 
  Space is given in MB. Negative $\log{FPR}$ means higher is better.}
\vspace{-1.5em}
  \label{tab:space-fpr}
\end{table}


\input{disk_throughput_plots}

\subsection{System benchmarks}



\begin{table}[t]
\centering
\begin{tabular}{l l l l l }
\hline
Filter & Size (log) & Map Inserts & Map Updates & Map Queries \\
\hline
AQF & 20 & 943718 & 0 & 0 \\
TQF & 20 & 943718 & 3608887 & 3356560 \\
ACF & 20 & 947815 & 584829 & 584829 \\
\hline
AQF & 24 & 15099494 & 0 & 0 \\
TQF & 24 & 15099494 & 56697889 & 52650676 \\
ACF & 24 & 15103591 & 9336669 & 9336669 \\
\hline
\end{tabular}
\caption{Number of reverse map accesses during insertions. The TQF and ACF make additional updates and queries in maintaining the reverse map. All filters of a given size were filled to 90\% load.}
\label{tab:db-access-during-inserts}
\end{table}

\begin{table}[t]
\centering
\begin{tabular}{l l l }
\hline
Reverse Map Setup & Inserts per Sec & Queries per Sec \\
\hline
Merged & $2.32 \times 10^5$ & $1.843 \times 10^7$ \\
Split & $1.12 \times 10^5$ & $1.819 \times 10^7$ \\
\hline
\end{tabular}
\caption{Comparison of merged vs split reverse map and database. Tests were run with 200M queries on a filter of size $2^{25}$. When the reverse map and database are split, inserts have to be done into each one independently, so insertion takes twice as long. However, due to the infrequency of false positives, querying only takes 1-2\% longer.}
\label{tab:reverse-map-setup-comparison}
\end{table}

\begin{table}[t]
\centering
\begin{tabular}{l l l }
\hline
Filter & CAIDA Queries/Sec & Shalla Queries/Sec \\
\hline
AQF & 31.6M & 15.8M \\
TQF & 8.4M & 6.7M \\
ACF & 34.1M & 18.3M \\
QF & 24.1M & 16.8M \\
CF & 119.2M & 62.2M \\
\hline
\end{tabular}
\caption{Query speed on real-world datasets after $2^{26}$ inserts.}
\label{tab:real-data-queries}
\vspace{-1.5em}
\end{table}

In this section, we evaluate the performance of \sysname as a front-end filter to a disk-based B-tree database. 
We create an instance of the disk-based B-tree by using the B-tree implementation from SplinterDB~\cite{ConwayGC20}. For these tests, we disable the $B^\varepsilon$-tree structure and its accompanying filters in SplinterDB, and use it as a filter-less on-disk dynamic $B$-tree. Because filters are frequently used alongside databases too large to fit in memory, storing data on disk using the B-tree is representative of real-world database systems.

We create an in-memory filter together with an on-disk map holding uniformly distributed keys with randomly generated values that represents a database. For the non-adaptive filters, the database holds the full set of keys-value pairs. For the adaptive filters, the database instead maps the fingerprints stored in the filter to their associated key-value pairs based on the optimization of using the reverse map as the database described in~\Cref{sec:implementation}.

To perform an insertion, a key is first inserted into the in-memory filter, and then the key-value pair is inserted into the database.
For non-adaptive filters, the key-value pair is inserted into the database directly. For adaptive filters, a fingerprint is obtained when inserting into the filter, and then the fingerprint-key-value triple is inserted into the database, mapping the fingerprint to the key-value pair.

To perform a query, the key is queried in the filter. If the filter returns ``negative,'' then that key is not in the database and no disk query is performed. If the filter returns a ``positive,'' the key and any corresponding data are retrieved from the database.
The non-adaptive filters do this by directly querying the database for the key, and a false positive occurs if the database was not able to find that key.
The adaptive filters instead query the database for the fingerprint found in the filter, and a false positive occurs if the key stored in the database does not match the key that was queried for. The adaptive filters can then use the returned key to adapt the filter so that the queried key no longer returns ``positive.''

\para{Insertion performance}
\Cref{fig:disk-inserts} shows insertion throughput of the database as a function of the filter load factor. For this experiment, we create filters with $2^{25}$ slots and insert keys from a uniform random distribution until the filters are 90\% full.
At 1\% progress intervals, we record the amount of time taken and calculate the insertion throughput over that interval. 

The system has similar performance when using the \sysname compared to the non-adaptive QF and CF filters. This shows that there is little to no overhead of using the adaptive filter on the insertion performance of the system.

Since insertions into the B-tree are the main bottleneck, all 5 filters start with roughly equal insertion throughput. However, the ACF and TQF fall off over time. This is due to the cost of maintaining the reverse map. \Cref{tab:db-access-during-inserts} shows the number of additional accesses to the reverse map done by the adaptive filters. As fingerprints are inserted into the \sysname, no entries for previous insertions need to be modified in the reverse map.
As the ACF fills up, it needs to do a large number of kick outs. Since the tag being stored to represent an item changes depending on its location in the filter, it is not sufficient to simply move a tag when performing a kick out. Instead, every kick out requires an expensive query to the backing map so that a new fingerprint can be hashed. The frequency of kick outs increases with load factor.
When a fingerprint is inserted into the TQF, it may cause other fingerprints to shift. The reverse map implemented with the TQF is based on location -- keys are stored alongside their fingerprints and thus need to shift with them. The constant shifting of fingerprints induces many additional reverse map accesses.

\para{Adversarial query performance}
In ~\Cref{fig:cache-adv-queries}, we measure the effect of a query-only adversary on system throughput. Even if the overall query distribution is uniform, an attacker can artificially skew the distribution by skewing their own queries. An adversary can detect the latency difference between negative and positive queries (including false positives), and even without knowledge of the actual insertion set, record a list of positive queries. They can then repeat these queries to intentionally induce I/Os. 
Even in a system with a cache, the adversary needs only collect enough false positives to overload the cache, then proceed to cycle between these queries to render the cache ineffective.

\newtext{In this experiment, we perform 200M queries. The first 100M give the adversary time to collect false positives. We then measure the average query throughput over the next 100M queries. We vary both the cache size and the frequency of adversarial queries over different trials. \Cref{fig:cache-adv-queries} has five plots, each plot showing the results of the experiment on a different cache size ranging from 1.5\%, 3\%, 6\%, 12\%, to 25\% the size of the input dataset. There are marginal improvements to the performance of the non-adaptive filters when provided a larger cache. However, a small fraction of adversarial queries has a disproportionately large impact on system performance even in the experiment with the largest cache. With the cache holding 25\% of the dataset, an adversary representing less than 1.5\% of the total queries can cause query throughput for the entire system to drop by 2$\times$ that of normal operation, which is already lower than that of the \sysname. This increases to 3$\times$ with 3\% adversarial influence and up to 10$\times$ with 10\% adversarial influence.}

The \sysname offers high and consistent query performance irrespective of the frequency of adversarial queries. Even without adversarial queries the AQF has comparable query performance to the non-adaptive filters. But in the presence of adversarial queries it can offer up to an order of magnitude higher query performance.

\para{Merged vs. split reverse map}
As discussed in ~\Cref{sec:implementation}, the reverse map and database can be merged into a single data structure so that reverse map inserts and queries do not incur additional overhead. This makes the database unable to perform range queries. The split reverse map and database setup, however, does support range queries. ~\Cref{tab:reverse-map-setup-comparison} compares the two setups to show the overhead of using the split setup in the case that one would like to use range queries. The insertion throughput is halved due to needing to insert into both the database and reverse map individually. However, query throughput is affected by only about 1\% due to the infrequency of false positives on general workloads.

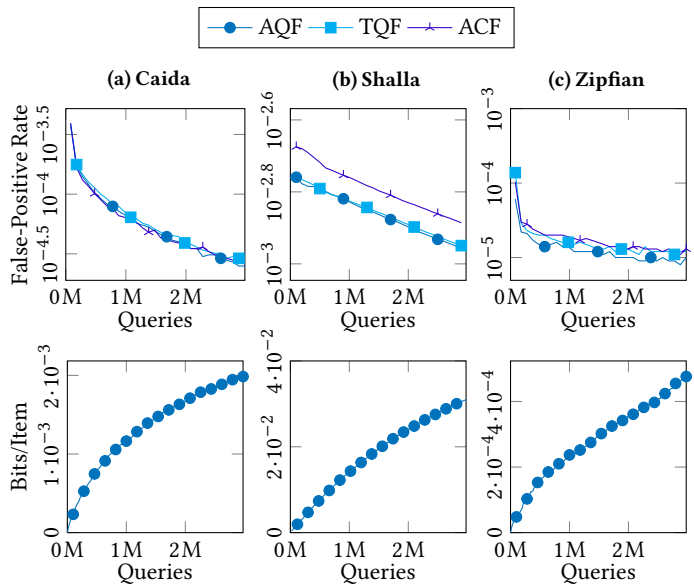
\begin{figure}
   \captionsetup{skip=4pt}
   \begin{center}
      \ref*{workload-legend}
   \end{center}
   \vspace{-1mm}
   \hspace*{-5mm}
   \begin{subfigure}{0.4\linewidth}
      \captionsetup{skip=0mm, margin={-5mm, 0mm}}
      \caption{Caida}
      \tikzsetnextfilename{workload-caida-fp}
      \begin{tikzpicture}
         \begin{axis}[
               WorkloadGraph,
               ylabel = False-Positive Rate,
               y label style={at={(axis description cs:0.35,.5)},anchor=south},
               ymode = log,
               legend to name={workload-legend},
               width = 0.889\linewidth + 27pt,
            ]
            \addplot [AQFLineStyle, mark repeat = 9, mark phase = 8] table [WorkloadTableFP] {data/workload/caida/aqf-averaged.csv};
            \addplot [TQFLineStyle, mark repeat = 9, mark phase = 2] table [WorkloadTableFP] {data/workload/caida/tqf-averaged.csv};
            \addplot [ACFLineStyle, mark repeat = 9, mark phase = 5] table [WorkloadTableFP] {data/workload/caida/acf-averaged.csv};
         \end{axis}
      \end{tikzpicture}
      \label{fig:workload-caida-fp}
   \end{subfigure}%
   \hspace*{-1mm}
   \begin{subfigure}{0.4\linewidth}
      \captionsetup{skip=0mm, margin={1mm, 0mm}}
      \vspace{2mm}
      \caption{Shalla}
      \vspace{-2mm}
      \tikzsetnextfilename{workload-shalla-fp}
      \begin{tikzpicture}
         \begin{axis}[
               WorkloadGraph,
               ylabel =,
               ymode = log,
               ymin = 0.0009,
               ymax = 0.0027,
            ]
            \addplot [AQFLineStyle, mark repeat = 8, mark phase = 1] table [WorkloadTableFP] {data/workload/shalla/aqf-averaged.csv};
            \addplot [TQFLineStyle, mark repeat = 8, mark phase = 5] table [WorkloadTableFP] {data/workload/shalla/tqf-averaged.csv};
            \addplot [ACFLineStyle, mark repeat = 8, mark phase = 0] table [WorkloadTableFP] {data/workload/shalla/acf-averaged.csv};
         \end{axis}
      \end{tikzpicture}
      \label{fig:workload-shalla-fp}
   \end{subfigure}%
   \hspace*{-5mm}
   \begin{subfigure}{0.4\linewidth}
      \captionsetup{skip=0mm, margin={1mm, 0mm}}
      \vspace{2.5mm}
      \caption{Zipfian}
      \vspace{-2.5mm}
      \tikzsetnextfilename{workload-zipfian-fp}
      \begin{tikzpicture}
         \begin{axis}[
               WorkloadGraph,
               ylabel =,
               ymode = log,
               ymax = 0.001,
            ]
            \addplot [AQFLineStyle, mark repeat = 9, mark phase = 6] table [WorkloadTableFP] {data/workload/zipfian/aqf-averaged.csv};
            \addplot [TQFLineStyle, mark repeat = 9, mark phase = 0] table [WorkloadTableFP] {data/workload/zipfian/tqf-averaged.csv};
            \addplot [ACFLineStyle, mark repeat = 9, mark phase = 3] table [WorkloadTableFP] {data/workload/zipfian/acf-averaged.csv};
         \end{axis}
      \end{tikzpicture}
      \label{fig:workload-zipfian-fp}
   \end{subfigure}\\
   \vspace*{-3mm}
   \hspace*{-5mm}
   \begin{subfigure}{0.4\linewidth}
      \captionsetup{skip=0pt}
      \begin{tikzpicture}
         \begin{axis}[
               WorkloadGraph,
               ylabel = Bits/Item,
               y label style={at={(axis description cs:0.35,.5)},anchor=south},
               ymin = 0,
            ]
             \addplot [AQFLineStyle, mark repeat = 9, mark phase = 6] table [WorkloadTableSpace] {data/workload/caida/aqf.csv};
         \end{axis}
      \end{tikzpicture}
      \label{fig:workload-caida-space}
   \end{subfigure}%
   \hspace*{-1mm}
   \begin{subfigure}{0.4\linewidth}
      \captionsetup{skip=0pt}
      \begin{tikzpicture}
         \begin{axis}[
               WorkloadGraph,
               ylabel =,
               ymin = 0,
               ymax = 0.04,
            ]
             \addplot [AQFLineStyle, mark repeat = 9, mark phase = 6] table [WorkloadTableSpace] {data/workload/shalla/aqf.csv};
         \end{axis}
      \end{tikzpicture}
      \label{fig:workload-shalla-space}
   \end{subfigure}%
   \hspace*{-5mm}
   \begin{subfigure}{0.4\linewidth}
      \captionsetup{skip=0pt}
      \begin{tikzpicture}
         \begin{axis}[
               WorkloadGraph,
               ylabel =,
               ymin = 0,
            ]
             \addplot [AQFLineStyle, mark repeat = 9, mark phase = 6] table [WorkloadTableSpace] {data/workload/zipfian/aqf.csv};
         \end{axis}
      \end{tikzpicture}
      \label{fig:workload-zipfian-space}
   \end{subfigure}
   \caption{False-positive rate (FPR) and additional space usage over time. FPR is on log scale.  The adaptive filters are able to almost immediately reduce their FPR by up to 100$\times$ on a skewed workload. The \sysname is able to achieve a marginally lower or equivalent FPR at negligible space overhead.}
    \label{fig:app-workloads}
\end{figure}
\subsection{Adaptivity rate for real-world datasets}

In application benchmarks, we use real-world datasets to evaluate the rate of change of false positive rate and space usage in adaptive filters in the presence of queries.

To evaluate the false positive rate over time, we first construct all three adaptive filters and fill them to 90\% load factor. We then construct a query set that will be performed over time, and the filters will adapt to the false positive queries. We also construct multiple independent query sets from the Zipfian distribution, which we use to compute the instantaneous false positive rate. The filters do not adapt while measuring the instantaneous false positive rate.

We perform a total of 3 million queries when the filters adapt to measure the rate of change of false positive rate and the space usage. We compute the instantaneous false positive rate and space usage after every 1\% of queries. 
To compute the instantaneous false positive rate, we construct 100 independent query sets from a Zipfian distribution. During the false positive computation, we turn off the adaption in the filters. Therefore, filters only adapt during normal queries and  do not adapt while computing the false positive rate.

We use three different datasets for application workloads. The first dataset is synthetic and generated from the Zipfian distribution (with Zipfian constant 1.5) on a universe of size 1 billion.
The second dataset is CAIDA passive traces~\cite{caida2016}, a set of anonymized network traces collected by the Center for Applied Internet Data Analysis between April 2008 and January 2019. 
The third database, the Shalla~\cite{shalla2021} block list, is a list of about 3 million malicious URLs compiled by Shalla Secure Services. For our experiments, we perform insertions and queries from the Shalla list.

\Cref{fig:app-workloads}  shows the rate of change of false positive rate and space during queries in adaptive filters.
The false positive rate immediately drop for all three filters. This is because the filters adapt to hot items early in the query sequence. 
Later on, the filters adapt to infrequent items, each of which brings a smaller drop in the false positive rate.

The drop in false positive rate over time is similar for all three adaptive filters for the Caida and Shalla datasets. These two datasets are not very skewed, and therefore the strong adaptivity advantage of \sysname over the TQF and ACF is not very apparent. 

On Zipfian queries, the false positive rate for all three filters drops equally. However, over time the false positive rate in the \sysname drops to lower than the TQF and ACF. This shows that the strong adaptivity guarantees in the \sysname lead to a lower false-positive rate over time. 
The TQF and ACF do not adapt completely the first time they encounter the false positive, which can result in subsequent false positive results when colliding with the already adapted key.

The space increase (in bits/item) increases similarly with all three adaptive filters. However, the \sysname has a lower initial space usage compared to the TQF and ACF (refer ~\Cref{tab:space-fpr}), and therefore the overall space usage of the \sysname is lower.

For actual query throughput, we list the numbers in \Cref{tab:real-data-queries}. This includes costs incurred by occasional queries to the database. The \sysname has comparable query throughput to both the AQF and QF. All filters benefit from cache-friendliness induced by the skewed distribution of CAIDA's queries. However, the \sysname and ACF see more improvement than the QF due to their ability to adapt to the most popular false positive queries in the distribution.

\subsection{Dynamic workloads}

Apart from static workloads, we also evaluate the change of false positive rate in the \sysname in the presence of deletions, insertions, and queries over time. This simulates the real-world use cases where the items in the \yes list change over time. We do not include other adaptive filter implementations, TQF and ACF, in this experiment as they do not support deletes.


Like the application workload, we perform 3 million queries and compute instantaneous false-positive rates after every 1\% of queries. At 10\% intervals, we delete and replace 20\% of the items. To compute the instantaneous false-positive rate, We use 1 million queries from the same Zipfian distribution without adapting.

\Cref{fig:workload-churn-fp} shows the false positive rate over time in the presence of deletions, insertions, and queries. Every 10\% of the operations, we introduce a massive churn in which 20\% of the items in the filter are replaced.
There are a couple of spikes in the false positive rate that coincide with the churns. They are caused when one of the inserted items causes a popular query to become a false positive. But the filter quickly adapts to the new item, and the false positive rate once again drops very low.
TQF and ACF are excluded from these experiments as those implementations do not support deletions.

In these experiments, we lose strong adaptivity. This is a deliberate choice and not a limitation of the \sysname. We can support strong adaptivity in the presence of updates to the \no list and \yes list by associating a small value to the fingerprints as described in~\Cref{sec:dynamic-adaptiveqf}. Strong adaptivity can be preserved in the presence of deletions by setting an item's counter to zero instead of deleting the item completely. We chose in these experiments not to preserve strong adaptivity in order to demonstrate the filter's ability to maintain a low false positive rate in dynamic environments and because the highly dynamic nature of these experiments would make the extra space usage impractical.

\begin{figure}
\captionsetup{skip = 4pt}
\begin{minipage}[t]{.9\linewidth}
   \begin{tikzpicture}
      \begin{axis}[
            ChurnGraph,
            ylabel = False-Positive Rate,
            xlabel = Operations,
            y label style={at={(axis description cs:0.1,.5)},anchor=south},
         ]
          \addplot [AQFLineStyle,
             mark repeat = 15,
             mark phase = 14,
             ] table [WorkloadTableFP] {data/workload/churn/aqf4.csv};
      \end{axis}
   \end{tikzpicture}
   \vspace{-5mm}
   \caption{False-positive rate over time on dynamic workloads. Churning points are marked.}
   \label{fig:workload-churn-fp}
\end{minipage}
   \begin{center}
      \ref*{blacklist-legend}
   \end{center}
\begin{minipage}[t]{0.9\linewidth}
   \begin{tikzpicture}
      \begin{axis}[
            BlacklistGraph,
            ylabel = Space (bytes),
            xmode = log,
            log basis x = 10,
            ymode = log,
            ymin = 1,
            y label style={at={(axis description cs:0.1,.5)},anchor=south},
         ]
         \addplot [AQFLineStyle] table [BlacklistTableSpace] {data/workload/blacklist/aqf.csv};
         \addplot [CBFLineStyle] table [BlacklistTableSpace] {data/workload/blacklist/cbf.csv};
      \end{axis}
   \end{tikzpicture}
   \vspace{-4mm}
   \caption{Space usage with varying ratio of \no list to \yes list. Space is on a log scale.}
   \label{fig:blacklist-workload}
\end{minipage}%
\vspace{-1em}
\end{figure}

\subsection{Merge and bulk load performance}
Filters are often used to build inverted text indexes on genomics data~\cite{PandeyABFJP18Cell}, where they are merged with other filters during compactions. Therefore, mergeability is a critical feature in filters for easy adoption in database systems. 

The QF supports efficient merging, and the \sysname supports efficient merging by extension, since we do not store any auxiliary hash encoding information. In contrast, merging in the TQF and ACF is not straightforward due to the hash selectors obscuring the original keys. To evaluate the merge performance of the \sysname, we use an in-memory hash table as the reverse map because we just want to evaluate the filter's merging speed. Note that merging two reverse maps is easy, because minirun lists sharing an ID can be concatenated, so long as miniruns in the filter are also concatenated during merging.

\begin{table}[t]
\centering
\begin{tabular}{l l }
\hline
Operation & Time per item ($\mu s$) \\
\hline
Insert into filter                & 0.520092 \\
Insert into half-size filter      & 0.353332 \\
Merge two half-size filters       & 0.039147 \\
Sort in hash order (qsort)        & 0.348060 \\
Bulk insert                       & 0.019569 \\
\hline
\end{tabular}
\caption{\boldmath Average latency for inserting items into an \sysname with $2^{26}$ slots until 90\% full, inserting into two \sysname instances with $2^{25}$ slots each and then merging, and sorting items beforehand and then bulk inserting. }
\label{tab:merge}
\vspace{-1.5em}
\end{table}

We also evaluate bulk loading in the \sysname, where the entire list of items is known. We find that the raw execution time of merging and bulk inserting is extremely low. For bulk loading, we would prepare by first sorting the items in hash order.

\Cref{tab:merge} shows the merge and bulk-build performance of the \sysname.  Inserting into smaller filters and merging is about $25\%$ faster than directly inserting into a full size filter.  Sorting followed by bulk building is about $10\%$ faster than merging.  In this particular test we used the C library function qsort.  More specialized sorting functions for a given situation may be even faster.  Merging is slower than bulk loading due to needing to compare quotient-remainder pairs between the two merged filters. There is also an overhead in identifying runs when stepping through the filter.

\subsection{Space comparison to static \YesNo solution}
\Cref{fig:blacklist-workload} shows the space usage of CRLlite~\cite{Larisch2017}, a custom-built and static \yesno list solution based on the cascading Bloom filter, and the \sysname.
The space usage of the \sysname while being dynamic is always smaller or similar to CRLite.
For the evaluation, we fix the aggregate size of the \no list and \yes list to 1 million items and evaluate the space with changing ratio of the \no list and \yes list.

\subsection{Non-adaptive filter additional space}
The adaptive filters have higher space usage (due to the overhead of adaptivity) compared to the non-adaptive filters. Therefore, we performed an experiment where we configured the QF and CF with a higher number of bits to give them extra space and lower false-positive rate.
With extra space, the uniform query performance of the CF increases by $1$\%, and the Zipfian query performance increases by $0.3$\%. Similar performance gains are seen for the QF. Therefore, even with extra space and a lower false-positive rate, the CF-based system is $20$\% slower compared to the \sysname-based system.



\section{Conclusion}\label{sec:conclusion}

We introduce \sysname in this paper. The \sysname is the first strongly adaptive filter which support high throughput operations using single-hashing and quotienting.  Using the adaptive filters in the system we can increase the overall system throughput by avoiding repeated unnecessary accesses to the backing stores (or other slower storage). The strongly adaptive filters guarantees consistently low false positives rate over time on dynamic workloads.

Traditional filters have been the go to data structure for over five decades. However, traditional filters lose their benefits in the presence of modern skewed and adverserial real-world workloads. Today's applications need practical adaptive filters that can offer strong theoretical guarantees and high performance independent of the data distribution to quickly and efficiently perform complex analyses on large-scale data.




\section*{Acknowledgments}
This research is funded in part by
NSF grant OAC 2339521.


\newpage

\bibliographystyle{ACM-Reference-Format}

\bibliography{bibliography}

\end{document}